\spnewtheorem{assumption}[theorem]{Assumption}{\bfseries}{\itshape}
\tikzstyle{arrow} = [very thick,->,>=stealth]
\pgfplotsset{compat=newest} % compatibility issue
\renewcommand{\@Opargbegintheorem}[4]{%
  #4\trivlist\item[\hskip\labelsep{#3#2\@thmcounterend}]}
\DeclareMathOperator*{\argmax}{arg\!\max}
\newcommand{\defeq}{\ensuremath{\coloneqq}}
\newcommand{\prob}{\ensuremath{\mathbf{P}}}
\newcommand{\expec}{\ensuremath{\mathbf{E}}}
\newcommand{\one}{\ensuremath{\mathsf{1}}}
\newcommand{\ind}{\ensuremath{\mathbf{I}}}
\newcommand{\reals}{\ensuremath{\mathbb{R}}}
\newcommand{\integers}{\ensuremath{\mathbb{Z}}}
\newcommand{\bigO}{\ensuremath{\mathcal{O}}}
\newcommand{\mpp}{\ensuremath{\mathsf{MPP}}}
\newcommand{\opt}{\ensuremath{\mathsf{OPT}}}
\newcommand{\pers}{\ensuremath{\mathsf{Pers}}}
\newcommand{\steady}{\ensuremath{\mathsf{Inv}}}
\newcommand{\conv}{\ensuremath{\mathsf{Conv}}}
\newcommand{\full}{\ensuremath{\mathsf{full}}\xspace}
\newcommand{\no}{\ensuremath{\mathsf{no}}\xspace}
\newcommand{\pair}{\ensuremath{\mathcal{X}}}
\newcommand{\rp}{\ensuremath{\mathsf{RP}}}
\newcommand{\lp}{\ensuremath{\mathsf{LP}}}
\begin{document}
\title{Markov Persuasion Processes with Endogenous Agent Beliefs}

%
%\titlerunning{Abbreviated paper title}
% If the paper title is too long for the running head, you can set
% an abbreviated paper title here
% %
\author{Krishnamurthy Iyer\inst{1}%\orcidID{0000-0002-5538-1432}
\and
  Haifeng Xu\inst{2}%\orcidID{0000-0001-6371-4906}
  \and
  You Zu\inst{1}%\orcidID{0000-0002-0091-4123}
  }
%

% First names are abbreviated in the running head.
% If there are more than two authors, 'et al.' is used.
%
\institute{Industrial and Systems Engineering, University of Minnesota, %Minneapolis, MN 55455 
\email{\{kriyer,zu000002\}@umn.edu}
\and
Department of Computer Science, University of Chicago, %Chicago, IL, 60637
\email{haifengxu@uchicago.edu}}
% \author{}
% \institute{}
\maketitle              % typeset the header of the contribution
\begin{abstract}
  We consider a dynamic Bayesian persuasion setting where a single
  long-lived sender persuades a stream of ``short-lived'' agents (receivers) by sharing
  information about a payoff-relevant state. The state transitions are
  Markovian conditional on the receivers' actions, and the sender
  seeks to maximize the long-run average reward by committing to a
  (possibly history-dependent) signaling mechanism. Such problems are
  common in platform markets, where the platform seeks to achieve desirable long-term revenue and welfare
  outcomes by influencing the actions of users. While most previous studies of Markov persuasion consider exogenous agent beliefs that are independent of the chain, we study a more natural variant with \emph{endogenous} agent beliefs that depend on the chain's realized history. A key challenge to analyzing such settings is to model the agents' partial knowledge about the history information. To address this challenge, we analyze a Markov persuasion process (MPP) under various information models that differ in the amount of information the receivers have about the history of the process. 
Specifically, we formulate a general partial-information model where each receiver observes the history with an $\ell$ periods lag (for $\ell \geq 0)$. Our technical contribution starts with analyzing two benchmark models, i.e., the full-history information model (i.e., $\ell = 0$) and the no-history information model (i.e., $\ell = \infty$). We establish an ordering of  the sender's payoff as a function of the informativeness of agent's information model (with no-history as the least informative),  and develop efficient algorithms to compute optimal solutions for these two benchmarks. For the information model with general $\ell$, we present the technical challenges in finding an optimal signaling mechanism, where even determining the right dependency on the history becomes difficult. Restricting the dependence on the history to a given length, we formulate the sender's problem as a bilinear optimization program. To bypass the resulting computational complexity, we use a robustness framework to design a ``simple'' \emph{history-independent} signaling mechanism that approximately achieves  optimal payoff when $\ell$ is reasonably large.
 \end{abstract}

%%% Local Variables:
%%% mode: latex
%%% TeX-master: "markov-persuasion"
%%% End:

% Paper body
\section{Introduction}

Many platform services and markets involve {\em freelance\/} service
providers (drivers in ride-hailing markets, hosts in accommodation
services, etc.) who make voluntary decisions on when and where to
provide their services, at what quality, and at which price. Often,
the participants of these platforms lack all the necessary information
about the system (overall demand, demand imbalances, etc.) to act
optimally. Given that the platform is typically better informed, many
of them provide recommendations to the participants on their actions
in the system. For example, ride-hailing platforms such as Uber and
Lyft share real-time demand information with drivers to enable them to
make repositioning decisions. In such settings, the participants'
actions not only affect their and the platform's immediate rewards,
but also impact the evolution of the system state. Given this
dependence, a central question is to understand how the platform can make
these recommendations taking into account the participants' incentives
as well as long-term objectives like welfare and revenue.

To study such settings, we consider a model of Markovian
persuasion~\citep{wu2022sequential, gan2022bayesian, ely2017beeps,  farhadi2022dynamic, lehrer2022markovian},
where a single long-lived sender seeks to persuade a stream of short-lived
receivers by sharing information about a payoff-relevant state. The
state transitions are assumed to be Markovian, where the system's next
state is fully determined (stochastically) by the current state and
the receiver's action. The state of the system is observable to the
sender but not to the receivers. In line with the
literature~\citep{kamenica2011bayesian,bergemann2019information}, we
assume that the sender {\em commits} to a signaling mechanism, which
recommends an action based on the current state and the history of the
process. The receivers are myopic, and choose an action that maximizes
their expected payoff under their posterior beliefs given the
recommendation. The goal of the sender is to maximize the long-run
average reward.

In such settings, given the underlying Markovian dynamics, the
effectiveness of persuasion is impacted by the receivers'
knowledge of the history. Past analyses of Markovian persuasion settings either assume the receivers have exogenous beliefs~\citep{gan2022bayesian,wu2022sequential}, or assume that the receivers have no information about the history~\citep{lingenbrink2019optimal,anunrojwong2022information}.\footnote{Few works assume that the receivers   observe past signals but not past states~\citep{ely2017beeps,farhadi2022dynamic,renault2017optimal,ashkenazi2022markovian}; \citet{lehrer2022markovian} allows for stochastic revelation of past states. However, all these papers study the specialized setting where the state evolves independently of the receivers' actions.} However, from a practical perspective, both these assumptions are restrictive. In particular, the participants in a platform typically have beliefs that are influenced by their past experiences therein. Furthermore, these participants are likely to have some limited information of the history. For instance, in a ride-hailing setting, a driver, in addition to knowing the typical demand patterns at different locations, may also have some stale historical information
about demand at a particular location from having dropped off a rider there earlier. In order to ensure that the driver heeds a recommendation to move to that location, a platform must take into
account the existence of such limited historical information.

% \kicomment{Most previous analyses of Markovian persuasion
% settings either assume the receivers have exogenous beliefs~\citep{gan2022bayesian,wu2022sequential}, or make one of two extreme assumptions: either the receivers have complete information about the history or have no information about the history~\citep{lingenbrink2019optimal,anunrojwong2022information}. However, from a practical
% perspective, neither of these extremes apply as the participants in a
% platform usually have some limited information about the history.\kicomment{Needs editing} For instance, in a
% ride-hailing setting, a driver may have some {\em stale} information
% about the demand at a different location after having dropped off a
% rider there in the past. In order to ensure that the driver heeds a
% recommendation to move to that location, a platform must take into
% account the existence of such limited historical information.}

In this paper, we seek to understand the sender's persuasion problem
when receivers may have limited information about the history. To do this, we
define the notion of an {\em information model\/}, which specifies how
each receiver's belief (prior to receiving a recommendation) is
related to the history of the process. In addition to the full-history
information model $\Phi_\full$ (where the receivers observe the entire history) and
the no-history information model $\Phi_\no$ (where receivers have no historical
information), we consider a sequence $\Phi_\ell$ of partial-history information
models where each receiver observes the history of the system with an $\ell$
periods lag, for some fixed $\ell \geq 1$. These partial-history information models provide lower-bounds on the sender's payoff in more complex information models, and thus serve as a standard for comparison.

Our main contributions are as follows:
\begin{enumerate}
\item \textbf{Establishing benchmarks.} We begin with the analysis of the two benchmark information models, i.e., no-history and full-history information. We prove that, under the no-history information model, the optimal signaling mechanism is history-independent, whereas in the full-history information model, the optimal mechanism depends on the current state as well as the previous state-action pair. Consequently, these characterizations  allow us to formulate the sender's persuasion problem as a succinct linear program under both information models. 
% While intuitive, these results allow us to formulate the persuasion problem as a polynomially sized linear program.

\item \textbf{Ordering and solving partial-history information models.} We then analyze the sequence of partial-history information models $\Phi_\ell$. We show that the sender's optimal payoff under any such model is less than that under the no-history information model $\Phi_\no$, but greater than that under
full-history information model $\Phi_\full$. Moreover, we show that the sender's optimal payoff increases as the lag $\ell$ increases. We then identify sufficient conditions on the model primitives that ensure that the sender's optimal payoff in the two benchmark information models are equal and hence partial information about the history on the receivers' part has no adverse impact on the sender's payoff. Nevertheless, we show that the analysis of the persuasion problem in the information models $\Phi_\ell$ presents technical intricacies that leaves open even the question of existence of an optimal signaling mechanism. Due to this, we study the sender's problem restricting attention to signaling mechanisms that only depend on a fixed length of past history. Here, we show that the sender's problem can be written as a {\em bilinear} program, whose size grows exponentially in the lag $\ell$. This suggests that solving to optimality the sender's problem in the partial-history information models can be computationally challenging as well. 
\item \textbf{Simple and approximately optimal persuasion.} Due to the complexity of solving the persuasion problem optimally under partial-history information models, we take an
  alternative approach and ask whether simple {\em
    history-independent} mechanisms can achieve approximately optimal
  payoffs while simultaneously being persuasive under limited
  historical information. Using the underlying Markovian dynamics and
  a robust persuasion approach~\citep{zu2021learning}, we answer the
  preceding question positively. In particular, we construct a
  history-independent signaling mechanism whose payoff is close to the
  optimal payoff under the no-history information model, and which is
  simultaneously persuasive in information models $\Phi_\ell$ for
  all large enough $\ell$. To obtain this construction, we prove an
  extension of the splitting lemma~\citep{aumannM95,kamenica2011bayesian} to Markovian settings.
\end{enumerate}

Our results contribute to the literature on information design and
persuasion in dynamic settings, with endogenous beliefs of the
receivers. From a theoretical perspective, our results establish the
effectiveness of simple history-independent signaling mechanisms in
such settings. Furthermore, our results highlight the importance of robustness in designing signaling mechanisms; when participants in
a platform may have limited historical information, a simple but
robust signaling mechanism can achieve good performance while being
persuasive.

\section{Literature Survey}
Our work contributes to the study of Bayesian
persuasion~\citep{kamenica2011bayesian,bergemann2016bayes,bergemann2019information,dughmi2017algorithmic} in dynamic
settings. Specifically, our work relates to the following streams of
literature.

\textbf{Markov persuasion.} A number of papers have looked at
persuasion problems where the state evolves according to Markov chain.
We discuss a few that are close to our setting.
\citet{gan2022bayesian} study an infinite-horizon dynamic persuasion
setting where the sender can observe the payoff-relevant parameter and
recommends actions to the uninformed receivers to maximize the
sender's cumulative rewards. They consider two types of receivers (myopic and
far-sighted) and show that when the receivers are myopic,
the optimal signaling strategy can be computed in polynomial time by
solving a linear program. But in the setting where the receivers are
far-sighted, it is NP-hard to find an approximately optimal policy. A
crucial difference between our model and theirs is that in our model,
the receivers' belief is endogenously determined by the sender's
signaling mechanism, while in their model, the receiver's belief about
the external parameter is exogenous and known to the sender.

\citet{wu2022sequential} focus on a finite-horizon Markov persuasion
process where a single long-lived sender seeks to persuade a stream of
myopic receivers to maximize the cumulative rewards. The state of the
world is seen by both the sender and receiver while the uncertain
outcome that affects the transition probability is only known to the
sender. The authors use a reinforcement learning approach to design an
online learning algorithm that achieves $O(\sqrt{T})$ regret. Similar
to the work \citep{gan2022bayesian}, the receiver's prior belief about
the state is an exogenous common prior distribution. They study a
finite-horizon Markov persuasion process while we consider the
infinite time horizon.

\citet{bernasconi2022sequential} study sequential persuasion problem where the sender seeks to persuade the far-sighted receiver by sharing the payoff-relevant state. The sender can observe the realization of the state, but neither the sender nor the receiver knows the state distribution. They show that without the knowledge of the state distribution, no algorithm can be persuasive at each round with high probability. The setting is different from ours because the transition probability is common knowledge, and the receivers are myopic and short-lived in our setting. 

Also relevant to us is the recent line of work on dynamic Bayesian persuasion, where the state evolves according to a Markov chain. \citet{ely2017beeps,renault2017optimal,farhadi2022dynamic} study the dynamic Bayesian persuasion problem between two long-lived players. To reiterate, in these works, the receiver's actions do not affect the state transitions. \citet{ely2017beeps} show that the sender's optimal strategy is myopic. Namely, the sender's optimal strategy ignores the effect of the sender's signals on the receiver's future belief. A generalization is the work by \citep{ashkenazi2022markovian}. They study the dynamic persuasion problem with binary states and any finite number of actions. The authors show that the sender's optimal strategy involves only two types of distribution of induced beliefs depending on the receiver's belief at each round. \citet{renault2017optimal}, who consider a similar setting propose a greedy disclosure policy and prove that it is optimal when the initial state is sufficiently close to the invariant distribution of the Markov chain. \citet{farhadi2022dynamic} study the setting in the finite time horizon and propose a time-varying optimal strategy for the sender. \citet{lehrer2022markovian} study the setting where the sender observes the state but the receivers randomly observe it. In contrast to these works, our model assumes that the receiver's actions affect the state evolution. Other papers also study dynamic persuasion for various application contexts, such as \citep{li2021sequential,wu2021sequential,board2018competitive,orlov2020persuading,bizzotto2021dynamic,alizamir2020warning}.

As examples of Markovian persuasion where the receivers have no information about the history, \citet{lingenbrink2019optimal} study the information-sharing problem in a single-server queue offering services at a fixed price. The service provider observes the queue and shares the information with the delay-sensitive Poisson arriving customers. The authors formulate the service providers' decision problem of maximizing the revenue as an infinite linear program. A similar approach is taken by \citep{anunrojwong2022information} to study information design to manage congestion in queues. 

\textbf{Robust persuasion.} Because our proposed signaling mechanism
in the partial-history information model relies on the robust
persuasion framework, our work also relates to robust persuasion.
\citet{zu2021learning} study a repeated Bayesian persuasion problem
where neither the sender nor the receiver knows the payoff-relevant
state distribution. They propose a robust signaling mechanism that
recommends persuasive recommendations at all rounds with high
probability and achieves $O(\sqrt{T\log T})$ regret. For our
robustness results, we extend their approach to settings where the
receivers' beliefs are endogenous. 

  \citet{kosterina2022persuasion} study a persuasion setting without the common prior assumption. In particular, the sender has a known prior, whereas only the set in which the receiver's prior lies is known to the sender. The sender evaluates the expected utility under each signaling mechanism concerning the worst-case prior of the receiver. \citet{ui2022optimal} study the optimal robust public information sharing where the sender discloses public information with receivers who also acquire costly private information. The sender is uncertain about the precision and the cost of the private information. Similarly, \citet{hu2021robust} study the problem where the receiver may have exogenous private information unknown to the sender. The sender seeks to maximize her expected payoff under the worst-case payoff  across the receiver’s possible private belief distributions  and then, among them, chooses the one that maximizes the expected utility under her conjectured prior. 
\citet{dworczak2022preparing} share a similar angle. Both works focus on static persuasion models with robustness to exogenous receiver beliefs whereas our model focuses on robust persuasion with the endogenous receivers' beliefs in sequential setups. Finally, there are also studies of robust persuasion with respect to receiver payoffs (e.g., \citep{babichenko2022regret}), though these are less relevant to the present work.

%%% Local Variables:
%%% mode: latex
%%% TeX-master: "markov-persuasion"
%%% End:

\section{Model}\label{sec:model}

Informally, we study a dynamic persuasion setting between a long-lived sender and
a stream of short-lived receivers where the underlying payoff-relevant state
evolves as a Markov persuasion process. At each time $t$, a new
receiver arrives to whom the sender, after observing the
current state, recommends an action. The
receiver then chooses an action, possibly different from the sender's recommendation, after which the state updates according to a Markov transition kernel which is common knowledge among the sender and the receivers. Each receiver seeks to maximize the expected utility with respect to her (posterior) beliefs, given the sender's recommendation and her (partial) information about the history of the process. The sender's problem, our object of investigation, is to decide how to recommend actions that maximize her long-run average payoff. We now describe this model formally.

We consider a sequential setting where at each time $t \in \integers$,
the payoff-relevant state is given by $\bar{\omega}_t \in \Omega$.
Here $\Omega$ is a finite set of states. We denote the {\em signal\/}
shared by the sender as $\bar{s}_t \in S$, and the action chosen by
the receiver as $\bar{a}_t \in A$, where again $S$ is a finite set of signals
and $A$ is a finite set of actions. (We describe how the sender shares
the signals and how the receivers choose their actions in detail
below.) The state evolution is Markovian given the receiver's action:
$\prob( \bar{\omega}_t = \omega | \bar{h}_t) = p(\omega |
\bar{\omega}_{t-1}, \bar{a}_{t-1})$ for each $\omega \in \Omega$,
where $\bar{h}_t$ denotes the history at time $t$, i.e., the infinite sequence of state, action and signals up to (but not including) time $t$.
Here, $p \colon \Omega \times \Omega \times A \to [0,1] $ is a
stationary Markovian transition kernel, with 
$p(\omega'|\omega, a)$ denoting the probability of the state
transitioning from $\bar{\omega}_{t-1} = \omega$ to
$\bar{\omega}_t = \omega'$ after the receiver takes action
$\bar{a}_{t-1} = a$. At the end of each time $t$, the corresponding
receiver obtains a payoff given by $u(\bar{\omega}_t, \bar{a}_t) \in \mathbb{R}$, whereas the sender obtains a reward given by $v(\bar{\omega}_t, \bar{a}_t) \in [0,1]$.

\subsection{Signaling Mechanisms}

We assume that at each time $t$ the sender observes the history
$\bar{h}_t$ and the current state $\bar{\omega}_t$. On the other hand,
the receiver at time $t$ does not observe the current state, but, as
we discuss later, may have some information about the history. To convey payoff-relevant information about the state at each time $t$, the sender shares a {\em private} signal $\bar{s}_t$ to the corresponding receiver. In particular, the sender commits to sharing these signals using a {\em signaling mechanism}, which in general, maps the history $\bar{h}_t$
and the state $\bar{\omega}_t$ at any time $t$ to a signal $\bar{s}_t$. However, we circumscribe the class of signaling mechanisms in the following ways. First, we restrict our attention to signaling mechanisms that depend only on a
finite part of the history at each time. While this assumption is primarily motivated by practical concerns, it also allows us to avoid some technical issues in defining the sender's long-run average payoff if the signaling mechanism depends on the infinite history. Second, we assume that the signal at each time $t$ depends only on the historical
state-action pairs, and not on the past signals. This assumption ensures that we do not implicitly induce dependence on the infinite history via past signals. Finally, we focus on {\em direct signaling mechanisms}~\citep{bergemann2019information} where the sender shares signals that are action recommendations, i.e., $S=A$. 

Given our assumption that signals are private, it follows by the revelation principle~\citep{ely2017beeps} that considering direct signaling mechanisms is without loss of generality. Further, it is sufficient to restrict our attention to direct signaling mechanisms that are {\em persuasive}, i.e., ones where the action recommendations are optimally adopted by the receivers. In such settings, the information in past signals is already contained in the past actions, and hence the assumption that the signals only depend on past state-action pairs is not restrictive. Thus, the main restrictive assumption we make is that the signals only have finite history dependence.

Before formalizing the preceding discussion, we introduce some notation to simplify some cumbersome expressions. We let $\pair = \Omega \times A$ denote the set of state-action pairs, and we denote a generic element of $\pair$ by $x = (\omega, a)$.  Thus, $\bar{x}_t = (\bar{\omega}_t, \bar{a}_t) \in \pair$ denotes the state-action pair at time $t$, and $p(\omega'|x)$ with $x = (\omega, a)$ stands for $p(\omega'|\omega, a)$. Next, for any $k \geq 1$ and at any time $t$, a {\em slice} of history $\bar{h}_t^k$ of length $k$ describes the sequence of states-action pairs in the past $k$ time periods:  $\bar{h}_t^k = (\bar{x}_{t-k}, \dots, \bar{x}_{t-1}) \in \pair^k$. We denote a generic element of $\pair^k$ by $h^k = (x_{-k}, \dots, x_{-1})$. Finally, we let $\pair^0$ denote the singleton set consisting of the unique (empty) slice of history of length zero.

A signaling mechanism is a mapping $\sigma \colon \pair^k \times \Omega \to \Delta(A)$ (for some $k \geq 0$) that specifies for each $h^k \in \pair^k$ and $\omega \in \Omega$, the probability $\sigma(a | h^k, \omega)$ with which the sender shares the signal $\bar{s}_t = a \in A$ if the (slice of) history is $\bar{h}_t^k = h^k$ and the current state is $\bar{\omega}_t = \omega$. We let $\Sigma_k$ denote the set of all signaling mechanisms that depend only on history slices of length $k$, and let $\Sigma = \cup_{k \geq 0} \Sigma_k$. The set $\Sigma_0$ contains the signaling mechanisms that do not depend on the history.

\subsection{Beliefs and Persuasiveness}

Next, we describe the notion of persuasiveness as applied to signaling mechanisms. To do this, we need to model the receivers' beliefs about the history of the process, which in general depends endogenously on how much information they have about the past.  We capture this endogenous level of historical information through the concept of an {\em information model} (see Section~\ref{sec:endogenous}). However, to develop our concepts, we will initially consider the receiver's prior beliefs as exogenously specified.

Suppose the sender commits to a signaling mechanism $\sigma \in \Sigma_k$ for some $k \geq 0$. Fix a time $t$, and let the corresponding receiver's belief over the history $\bar{h}_t$ and the current state $\bar{\omega}_t$ (prior to receiving any signal) be denoted by $\phi_t$.  Then, upon receiving an action recommendation $\bar{s}_t = a$, the receiver's posterior belief that $\bar{\omega}_t = \omega$ can be found using Bayes' rule as 
\begin{align*}
F( \omega| a; \phi_t, \sigma) = \frac{\sum_{h^k}\phi_t(h^k, \omega) \sigma(a|h^k, \omega)}{\sum_{\omega'} \sum_{h^k}\phi_t(h^k,\omega') \sigma(a| h^k, \omega')}.
\end{align*}
Here, $\phi_t(h^k, \omega)$ denotes the receiver's marginal belief that the history slice of length $k$ is $\bar{h}_t^k = h^k \in \pair^k$ and the state is $\bar{\omega}_t = \omega$. The receiver  then chooses an action that maximizes their expected utility under their posterior belief $F(\cdot|a; \phi_t, \sigma)$. We say the signaling mechanism $\sigma$ is persuasive w.r.t.~the belief $\phi_t$, if the recommended action $\bar{s}_t = a$ is optimal for the receiver, i.e., the following inequality holds:
\begin{align*}
    \sum_{\omega}  F(\omega| a; \phi_t,  \sigma) \ {\partial u}(\omega, a, a') \geq 0, \text{ for all $a, a' \in A$},
\end{align*}
where ${\partial u}(\omega, a, a') \defeq u(\omega,a) - u(\omega, a')$
denotes the incremental payoff for the receiver for choosing action
$a \in A$ over action $a' \in A$ at state $\omega \in \Omega$. The inequality states that the receiver's expected utility with the action $a$ is higher than that with $a'$ when action $a$ is recommended.

More generally, let $\Phi = \{\phi_t : t \in \integers\}$ denote the sequence of receivers' beliefs at each time $t \in \integers$. For any such sequence $\Phi$, the set $\pers(\Phi)$ of persuasive signaling mechanisms  contains all signaling mechanisms $\sigma$ that are persuasive w.r.t.~$\phi_t$ for each $t \in \integers$. We note that the set $\pers(\Phi)$ is non-empty, since the mechanism that recommends the receivers' preferred action at each state is persuasive for sequence $\Phi$.

\subsection{Invariant Distribution}

As a step towards describing the models of endogenous historical information held by the receivers, we next analyze the induced dynamics under a signaling mechanism to characterize its invariant distribution. Suppose the sender chooses a signaling mechanism $\sigma \in \pers(\Phi) \cap \Sigma_k$ for some $k \geq 0$ and that the receivers follow the sender's recommendations. For $k \geq 1$, the induced process dynamics can be described as a Markov chain with states given by slices $\bar{h}_t^k \in \pair^k$. An invariant distribution $\pi \in \Delta(\pair^k)$ of this chain
satisfies the following balance equations:
\begin{align}\label{eq:balance}
  \sum_{x_{-k} \in \pair} \pi(x_{-k}, h^{k-1}) p(\omega| x_{-1}) \sigma(a|(x_{-k}, h^{k-1}), \omega) &= \pi(h^{k-1}, \omega, a),
\end{align}
for each $h^{k-1} = (x_{-(k-1)}, \dots, x_{-1}) \in \pair^{k-1}$ and $(\omega, a) \in \pair$. Here, the left-hand side expression gives the probability that the
slice $\bar{h}_{t+1}^k$ equals $(h^{k-1}, \omega, a)$ after a Markovian transition if the slice $\bar{h}^k_{t} =(x_{-k}, h^{k-1})$ is distributed as
$\pi$ and the receiver at time $t$ follows the sender's recommendation. The equality then just states the fact that for an invariant distribution, this distribution must be $\pi$ itself. For $k=0$, the induced process dynamics can be described as Markov chain with states $(\bar{\omega}_{t-1},\bar{a}_{t-1}) = \bar{h}_t^1 \in \pair$, and the balance equation for an invariant distribution $\pi \in \Delta(\pair)$ given by
\begin{align*}
    \sum_{x_{-1} \in \pair} \pi(x_{-1}) p(\omega | x_{-1}) \sigma(a|\omega) = \pi(\omega, a), \quad \text{for all $(\omega, a) \in \pair$.}
\end{align*}

Since the state of the induced Markov chain includes the receivers' actions, in general there might be multiple invariant distributions $\pi$ corresponding to a signaling mechanism. (As a trivial example, consider a setting with $\Omega = \{0\}$, $A = \{0,1\}$ and a receiver who is indifferent between the two actions. Let $\sigma \in \Sigma_1$ be a signaling mechanism that sends signal $\bar{s}_t = 0$ if $(\bar{\omega}_{t-1}, \bar{a}_{t-1}, \bar{\omega}_{t}) = (0,0,0)$ and sends signal $\bar{s}_t = 1$ if $(\bar{\omega}_{t-1}, \bar{a}_{t-1}, \bar{\omega}_{t}) = (0,1,0)$. Then, any distribution over $\pair$ is an invariant distribution under $\sigma$.) Hereafter, in cases where there are multiple invariant distributions, we focus on the one under which the sender's expected reward is maximized (with ties broken arbitrarily). We denote such a distribution by $\steady(\sigma)$. Note that this assumption is aligned with the notion of sender-preferred equilibrium common in the persuasion literature~\citep{kamenica2011bayesian}. 

Below, we abuse the notation slightly by letting $\pi = \steady(\sigma)$ also denote the distribution of the Markov process induced under a signaling mechanism $\sigma \in \Sigma_k$, i.e., the distribution of the entire history $\bar{h}_t$ at each time $t$. Furthermore, for any $\ell \geq 1$, we let $\pi(h^\ell)$  denote the (marginal) distribution of a slice of history $\bar{h}_t^\ell$.  

\subsection{Modeling Receivers' Endogenous Information}\label{sec:endogenous}
We now formally describe the notion of an {\em information model}, which captures the receivers'   endogenous information about the historical evolution of the process.  We consider two benchmark settings, one where each receiver fully observes the history, and the other where the receivers have no information about the history. In addition, we consider a sequence of settings where the receivers have partial information about the history. 

In general, when receivers have information about the history, the belief sequence $\Phi = \{ \phi_t : t \in \integers\}$ itself will depend on the process. The nature of this dependence is determined by the amount of information the receivers have about the past.

\begin{enumerate}
\item \textbf{Full-history information model:} To motivate the notion, fix a signaling mechanism $\sigma \in \Sigma$ and consider first the setting where at each time $t$, the corresponding receiver has complete knowledge of the history $\bar{h}_t$. Then, the receiver's belief $\phi_t$ over $(\bar{h}_t, \bar{\omega}_t)$ must put all its weight on the realized value of $\bar{h}_t$. In other words, we have for all $t \in \integers$, 
\begin{align*}
  \prob^\sigma\!\left( \phi_t = e_{h} \otimes p(\cdot | x_{-1}) \left|\ \bar{h}_t = h, \bar{\omega}_t = \omega \right.\right) =
  1, \quad \text{for all $h \in \pair^\infty$ and $\omega \in \Omega$},
\end{align*}
where $e_{h}$ is the distribution that puts all its weight on $h = (\dots, x_{-2}, x_{-1}) \in \pair^\infty$, and $e_h \otimes p(\cdot|x_{-1})$ encodes the fact that the receivers' belief about $\bar{\omega}_t$ comes from the resulting Markovian transition $p(\cdot | x_{-1})$. (Here, $\prob^\sigma$ denotes the probability measure induced by the signaling mechanism $\sigma$
together with the underlying Markovian dynamics, assuming that the receivers adopt the sender's recommendations.) When the preceding condition holds, we denote the resulting belief sequence $\{\phi_t: t  \in \integers\}$ by $\Phi_\full$ and call it the {\em full-history information model}.

\item \textbf{No-history information model:} At the other extreme, consider the case where the receivers have no information about the history of the process. Then, at any time $t$, the receiver's belief $\phi_t$ must be independent of the realized history. A natural approach, motivated by the requirement of consistency,\footnote{In certain cases, this modeling assumption can be established formally. For instance, if time periods denote the
Poisson arrival times of receivers to a stochastic system, then the receivers 
observe the system distributed as the time-average~\citep{wolff1982poisson}, which equals the
expectation w.r.t.~the invariant distribution when the latter is
unique.} is to let each belief $\phi_t$ equal the invariant distribution $\steady(\sigma)$. Specifically, we have for each $t \in \integers$, 
\begin{align*}
  \prob^\sigma\!\left( \phi_t = \steady(\sigma) \otimes P  \left|\ \bar{h}_t= h , \bar{\omega}_t = \omega\right.\right) = 1,  \text{for all $h \in \pair^\infty$ and $\omega \in \Omega$}.  
\end{align*}
Here, $\steady(\sigma) \otimes P$ encodes the distribution of $(\bar{h}_t, \bar{\omega}_{t})$ where the history $\bar{h}_t$ is distributed as $\steady(\sigma)$, and the state $\bar{\omega}_t$ is obtained from a subsequent transition from the Markov kernel $P$. For the setting where the preceding condition holds, we denote the belief sequence $\{\phi_t: t  \in \integers\}$ by $\Phi_\no$ and call it the {\em no-history information model}.

\item \textbf{Partial-history information models:} Between the
two extremes described above lie a multitude of information
models where receivers possess partial information about the process history. In such partial-history models, the belief sequence $\phi_t$ would have a complex dependence on the history $\bar{h}_t$. Although a comprehensive analysis of all such models is beyond the scope of this paper, we focus on a particular sequence of information models to capture realistic scenarios where the receivers may have some stale information about the process.\footnote{Such stale information about the process could plausibly arise from the receivers having interacted with the process in the past; however, we do not consider such repeated interactions in our model.} 

Specifically, for a fixed $\ell \geq 0$, consider the setting where the receivers observe the process with an $\ell$-period lag. In other words, at each time $t$, the receiver observes the history $\bar{h}_{t-\ell}$, i.e., all the state-action pairs before time $t-\ell$. Then, we have for each $t\in \integers$, 
\begin{align*}
    \prob^\sigma\!\left( \phi_t = e_{h_{-\ell}} \otimes P_\sigma^\ell \otimes P \left|\ \bar{h}_t = h, \bar{\omega}_t = \omega \right.\right) = 1, \text{for all $h \in \pair^\infty$ and $\omega \in \Omega$.}
\end{align*}
Here, $e_{h_{-\ell}}$ is the distribution that puts all its weight on the realization $\bar{h}_{t-\ell} = h_{-\ell}$, $P_\sigma^\ell$ encodes the subsequent $\ell$ transitions of the process, i.e., the distribution of $(\bar{x}_{t-\ell}, \dots, \bar{x}_{t-1})$ under the signaling mechanism $\sigma$, and finally, the kernel $P$ captures the subsequent distribution of the $\bar{\omega}_t$. When the preceding holds, we denote the resulting belief sequence $\{\phi_t: t \in \integers\}$ as $\Phi_\ell$, and call it the {\em partial-history information model with lag $\ell$.} We note that $\Phi_0$ is same as the full-history information model $\Phi_\full$.  

An advantage of studying the sequence $\{\Phi_\ell\}_{\ell \geq 0}$ of information models is that they serve as a standard of comparison for other more complex information models. In particular, one can show that the sender's payoff under the information model $\Phi_\ell$ acts as a lower-bound on her payoff in settings where the receivers only have limited, but arbitrary, information about states and action $\ell$ periods and further back.  Thus, while we do not capture all possible partial-history information models, our choice provides a lower bound of many other information models and gives insight into the problem's fundamental difficulty.

\end{enumerate}

\subsection{Sender's Persuasion Problem}
\label{sec:sender-problem}
Finally, we are ready to formally describe the sender's persuasion problem. We focus on settings where the sender seeks to maximize the long-run average reward over the infinite horizon. Given the Markovian state-evolution, this is equivalent to the sender choosing a signaling mechanism to
maximize the expected rewards under the resulting invariant distribution. Formally, we denote the sender's problem under the information model $\Phi$ as
\begin{align}\label{info-model}
  \mpp(\Phi) \defeq \max_{\sigma, \pi}\qquad & \expec^{\pi} [v(\omega, a)]\notag \\
  \text{ subject to, } \quad &\sigma \in \pers(\Phi)\cap \Sigma, \quad  \pi = \steady(\sigma),
\end{align}     
and let $\opt(\Phi)$ denote its optimal value. Furthermore, for $k \geq 0$, we analogously define $\mpp(\Phi, \Sigma_k)$  (and $\opt(\Phi, \Sigma_k)$) as the sender's problem (and its optimal value) when the signaling mechanism is restricted to lie in the set $\Sigma_k$. In the preceding optimization
problem, unlike a static persuasion problem, the expectation in the objective is taken with respect to the invariant distribution $\pi$ which is in turn determined by the signaling mechanism $\sigma$. 

Hereafter, we make the following standard {\em unichain} assumption~\citep{puterman2014markov,tsitsiklis}, which is common in the analysis of average-reward Markov decision processes. To state formally, a stationary Markovian policy is a decision rule that chooses a possibly randomized action based solely on the current state. Such a policy naturally induces a Markov chain over the state space. The unichain condition requires the induced Markov chain to have a single ergodic class. 
\begin{assumption}[Unichain]\label{as:unichain} Under any stationary Markovian policy,
  the resulting Markov chain has a single ergodic class, i.e., it is aperiodic and irreducible.
\end{assumption}
This assumption ensures that the invariant distribution under any signaling mechanism $\sigma \in \Sigma_0$, assuming the receivers adopt the recommendations, is unique, and thus the long-run averages are independent of the initial conditions.

%%% Local Variables:
%%% mode: latex
%%% TeX-master: "markov-persuasion"
%%% End:

\section{Benchmarking Markovian Persuasion with Historical Information}
\label{sec:benchmark}
With the goal towards studying the sender's persuasion problem in general information models, we first analyze the sender's problem~\eqref{info-model} under the benchmark full-history and no-history information models. As we show later, the sender's optimal payoff in the two benchmark models provide bounds on the sender's optimal payoff in partial-history information models. Moreover, the results here set the stage for our subsequent analysis of the partial-history information models.

\subsection{Analysis of the Benchmark Information Models}\label{sec:analysis-benchmark}

Our analysis of the benchmark information models begins with the following lemma, which establishes that in each case, there exists an optimal signaling mechanism that is fairly simple, and does not heavily depend on the history. In particular, the optimal
mechanism under the no-history information model $\Phi_\no$ is history-independent, whereas it additionally depends on the previous state-action pair under the full-history information model $\Phi_\full$.

\begin{lemma}\label{lem:no-full-reduction}
  In the no-history information model $\Phi_\no$, there exists an
  optimal signaling mechanism $\sigma$ that is history-independent,
  i.e., $\sigma \in \Sigma_0$. Similarly, under the full-history
  information model $\Phi_\full$, there exists an optimal signaling
  mechanism $\sigma \in \Sigma_1$, which depends only on the current
  state and the previous state-action pair. 
\end{lemma}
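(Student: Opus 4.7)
The plan for both parts is a common marginalization argument: starting from an arbitrary optimal $\sigma \in \Sigma_k$ with invariant distribution $\pi = \steady(\sigma)$, I would construct a mechanism $\sigma'$ of shorter history dependence by averaging $\sigma$ over the irrelevant portion of the history slice $h^k$ against $\pi$, and then verify that (i) the sender's payoff is preserved and (ii) $\sigma'$ remains persuasive in the corresponding information model.

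\emph{No-history model.} Given an optimal $\sigma \in \Sigma_k$ for $\Phi_\no$, I would define $\sigma' \in \Sigma_0$ by averaging out $h^k$:
\[
\sigma'(a \mid \omega) \defeq \frac{\sum_{h^k} \pi(h^k)\, p(\omega \mid x_{-1})\, \sigma(a \mid h^k, \omega)}{\sum_{h^k} \pi(h^k)\, p(\omega \mid x_{-1})},
\]
where $x_{-1}$ denotes the last state-action pair of $h^k$. Let $\bar\pi$ denote the marginal of $\pi$ over the current state-action pair. Using the balance equation~\eqref{eq:balance} for $\sigma$ and stationarity of $\pi$, the numerator of $\sigma'(a \mid \omega)$ collapses to $\bar\pi(\omega, a)$ and the denominator to $\bar\pi(\omega)$. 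Hence $\bar\pi$ solves the balance equation on $\pair$ for $\sigma'$, and by Assumption~\ref{as:unichain} applied to $\sigma' \in \Sigma_0$ the invariant is unique, so $\steady(\sigma') = \bar\pi$ and the sender's payoff is preserved. For persuasiveness I would then observe that, under $\Phi_\no$, the receiver's posterior on $\omega$ given recommendation $a$ equals $\bar\pi(\omega, a)/\bar\pi(a)$ under both $\sigma$ (with prior $\pi \otimes P$) and $\sigma'$ (with prior $\bar\pi \otimes P$); hence the IC constraints are identical and inherited from $\sigma$.

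\emph{Full-history model.} For $\sigma \in \Sigma_k$ with $k \geq 2$, I would define $\sigma' \in \Sigma_1$ by averaging out only the prefix $y \in \pair^{k-1}$ of each slice while retaining the last state-action pair $x$:
\[
\sigma'(a \mid x, \omega) \defeq \frac{\sum_{y \in \pair^{k-1}} \pi(y, x)\, \sigma(a \mid (y, x), \omega)}{\pi(x)}.
\]
The same balance-equation manipulation shows that $\bar\pi$ is an invariant of the chain on $\pair$ induced by $\sigma'$; since $\steady(\sigma')$ is chosen as the sender-preferred invariant, the payoff under $\sigma'$ is at least $\expec^{\bar\pi}[v] = \expec^{\pi}[v]$. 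For persuasiveness, I would clear the positive denominator in Bayes' rule to rewrite the IC under $\Phi_\full$ at a realized history with last pair $x_{-1}$ and slice $h^k$ as $\sum_\omega p(\omega \mid x_{-1})\, \sigma(a \mid h^k, \omega)\, \partial u(\omega, a, a') \geq 0$, which is \emph{linear} in $\sigma(\cdot \mid h^k, \cdot)$. Averaging this inequality against the nonnegative weights $\pi(y, x)/\pi(x)$ over $y$ gives exactly the IC constraint for $\sigma'$ at $x$, which is therefore nonnegative for every $x$ in the support of $\bar\pi$.

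\emph{Main obstacle.} The core technical step is checking that $\bar\pi$ remains an invariant distribution under the marginalized mechanism in each case, which couples the balance equation on the larger state space $\pair^k$ with the shorter-memory dynamics on $\pair$; this relies on summing~\eqref{eq:balance} over the prefix slots and identifying, via stationarity, the marginal of the last slot as $\bar\pi$. A secondary subtlety in the full-history case is that the chain induced on $\pair$ by a mechanism in $\Sigma_1$ is not directly covered by the unichain assumption (which is stated for policies depending solely on the current state), so a priori $\sigma'$ may admit multiple invariants; this is harmless for the payoff comparison (the sender-preferred convention ensures $\sigma'$ attains at least $\expec^{\bar\pi}[v]$) and can be handled for persuasiveness by defining $\sigma'$ to recommend the receivers' preferred action on any $x \notin \mathrm{supp}(\bar\pi)$, preserving persuasiveness on all realized histories of the induced chain.
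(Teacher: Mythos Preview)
Your proposal is correct and follows essentially the same route as the paper: both construct the reduced mechanism by marginalizing the invariant distribution (your ratio $\sum_{h^k}\pi(h^k)p(\omega\mid x_{-1})\sigma(a\mid h^k,\omega)\big/\bar\pi(\omega)$ coincides with the paper's $\pi(\omega,a)/\sum_{a'}\pi(\omega,a')$, and similarly in the full-history case), verify that the marginal $\bar\pi$ solves the shorter-memory balance equation, and check persuasiveness by showing the posteriors agree (no-history) or by averaging the linear IC inequality (full-history). Your explicit remark about the unichain assumption not covering $\Sigma_1$ mechanisms, and the use of the sender-preferred convention plus the receiver-preferred default off-support, is slightly more careful than the paper, which writes $\widehat\pi\in\steady(\widehat\sigma)$ without comment.
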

The proof uses the underlying Markovian dynamics of the process, and is provided in Appendix~\ref{ap:endogenous} (as are the proofs are of all results in this section). Most importantly, the lemma allows us to show that the sender's problem~\eqref{info-model} in the benchmark settings can be formulated as a polynomially-sized linear program. This LP formulation plays a key role in particular in Proposition~\ref{prop:full-equals-no}, where we characterize sufficient conditions under which the sender's optimal payoffs in the two benchmark information models are equal.

To obtain the LP formulations, we begin by recalling that under the information model $\Phi_\no$ and with a signaling mechanism $\sigma \in \Sigma_0$, the states $\bar{\omega}_t \in \Omega$ form a Markov chain. On the other hand, for a signaling mechanism $\sigma \in \Sigma_1$ under the model $\Phi_\full$, the induced Markov chain can be described with states $(\bar{x}_{t-1}, \bar{\omega}_t) \in \pair \times \Omega$. We now introduce some notation to unify the presentation. First, define $\pair_\full \defeq \pair$ and $\pair_\no = \{ \star\}$. For $i \in \{\full, \no\}$, define the corresponding state space $\mathcal{W}_i \defeq \pair_i \times \Omega$. Let $v(w, a) = v(\omega, a)$ for $w = (x, \omega) \in \mathcal{W}_i$, and extend $u(w, a)$ and ${\partial u}(w, a,a')$ for $w \in \mathcal{W}_i$ similarly. Next, for $w, \hat{w} \in \mathcal{W}_i$, with $w = (x_{-1}, \omega)$ and $\hat{w} = (\hat{x}_{-1},\hat{\omega})$, we define the transition kernel $p(w | \hat{w}, \hat{a}) \defeq p(\omega| \hat{\omega}, \hat{a}) \ind\{ x_{-1} = (\hat{\omega}, \hat{a}) \text{ or } x_{-1} = \star \}$. Finally, for $x \in \pair_i$ and $w = (x_{-1}, \omega) \in \mathcal{W}_i$, define $D(x, w) \defeq \ind\{ x = x_{-1}\}$.  
% First define the corresponding state spaces $\mathcal{W}_\no \defeq \Omega$ and $\mathcal{W}_\full \defeq \pair \times \Omega$. Let $v(w, a) = v(\omega, a)$ for $w = (x, \omega) \in \mathcal{W}_\full$, and extend $u(w, a)$ and ${\partial u}(w, a,a')$ for $w \in \mathcal{W}_\full$ similarly. Next, for $w, \hat{w} \in \mathcal{W}_\full$, with $w = (x_{-1}, \omega)$ and $\hat{w} = (\hat{x}_{-1},\hat{\omega})$, we define the transition kernel $p(w | \hat{w}, \hat{a}) \defeq p(\omega| \hat{\omega}, \hat{a}) \ind\{ x_{-1} = (\hat{\omega}, \hat{a})\}$. Finally, let $\pair_\full \defeq \pair$ and $\pair_\no \defeq \{0\}$, and for $x \in \pair_\full$ and $w = (x_{-1},  \omega) \in \mathcal{W}_\full$, define $D(x,w) = \ind\{ x = x_{-1}\}$. For $x \in \pair_\no, w \in \mathcal{W}_\no$, define $D(x, w) \defeq 1$. 

With these notation in place, for $i \in \{\full, \no\}$, we consider the following linear program $\mathsf{LP}(i)$  with variables $z(w,a)$ with $w \in \mathcal{W}_i$ and $a \in A$. 
\begin{align}
  \mathsf{LP}(i) \defeq \max_{z \geq 0} \quad   \sum_{w \in \mathcal{W}_i} \sum_{a \in A } z(w, a) v(w, a) & \notag\\  
  \sum_{w \in \mathcal{W}_i} z(w , a) D(x, w) {\partial u}(w, a, a')  &\geq 0, \quad \text{for all $a, a' \in A$ and $x \in \pair_i$} \notag\\
  \sum_{\hat{w} \in \mathcal{W}_i} \sum_{\hat{a}\in A} z(\hat{w}, \hat{a}) p( w| \hat{w}, \hat{a}) &= \sum_{a} z(w,a), \quad \text{ for all $w \in \mathcal{W}_i$} \notag\\ 
  \sum_{w \in \mathcal{W}_i} \sum_{a} z(w, a)  &= 1.\label{lp:benchmark}
\end{align}
To interpret the linear program $\mathsf{LP}(i)$, we focus on the full-history information model $\Phi_\full$, and consider a persuasive signaling mechanism $\sigma \in \Sigma_1$. Then, writing the balance equations~\eqref{eq:balance} for the invariant distribution $\pi \in \steady(\sigma)$, we obtain
\begin{align*}
    \sum_{\omega_{-1}, a_{-1}} \pi(\omega_{-1}, a_{-1}) p(\omega| \omega_{-1}, a_{-1}) \sigma(a| \omega_{-1}, a_{-1}, \omega) = \pi(\omega, a).
\end{align*}
Since this equation is bilinear in $\pi$ and $\sigma$, we introduce the variables $z(w, a)$ for $w =(\omega_{-1},a_{-1}, \omega)$ to denote the summands on the left-hand side of the preceding equation, and note that $z$ constitutes the joint distribution of two consecutive state-action pairs $(\bar{x}_{t-1}, \bar{x}_{t})$ under the signaling mechanism $\sigma$. These variables are readily seen to satisfy the two equalities in $\mathsf{LP}(\full)$. Assuming $\sigma$ is persuasive under $\Phi_\full$ then yields the inequality. Thus, the variables $z(w, a)$ defined above are feasible for $\mathsf{LP}(\full)$. We obtain the following proposition upon showing the converse, i.e., for any $z$ feasible for $\mathsf{LP}(\full)$, there exists a signaling mechanism $\sigma \in \Sigma_1$ persuasive under the full-history information model $\Phi_\full$ satisfying the equation $z(w, a) = \pi(\omega_{-1}, a_{-1}) p(\omega|\omega_{-1}, a_{-1}) \sigma(a| \omega_{-1}, a_{-1}, \omega)$ (and a similar statement for the $\Phi_\no$ case). 
\begin{proposition}\label{prop:lp-formulations}
For $i \in \{\full, \no\}$, the sender's problem  $\mpp(\Phi_i)$ can be equivalently formulated as the corresponding linear program $\mathsf{LP}(i)$. In particular, for any optimal solution $z_i^*$ of $\mathsf{LP}(i)$ the signaling mechanism $\sigma_i$, defined as,  
\begin{align*}
    \sigma_i(a|w) \defeq \frac{z_i^*(w,a)}{\sum_{a'} z_i^*(w, a')}, \quad \text{for all $w \in \mathcal{W}_i$ and $a \in A$,}
\end{align*}
(if the denominator is positive; otherwise, recommending  the receivers' preferred action), is optimal for the problem $\mpp(\Phi_i)$.
\end{proposition}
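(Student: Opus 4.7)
The plan is to prove the proposition by constructing a value-preserving correspondence between persuasive signaling mechanisms feasible for $\mpp(\Phi_i)$ and solutions $z$ feasible for $\mathsf{LP}(i)$. Lemma~\ref{lem:no-full-reduction} lets me restrict to $\sigma \in \Sigma_1$ for $i = \full$ and to $\sigma \in \Sigma_0$ for $i = \no$, which matches the dimensionality of the LP variables indexed by $w \in \mathcal{W}_i = \pair_i \times \Omega$.

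For the forward direction, given a persuasive $\sigma$ with its sender-preferred invariant distribution $\pi = \steady(\sigma)$, I would set
\[
z(w, a) \defeq \pi(x_{-1}) \, p(\omega \mid x_{-1}) \, \sigma(a \mid x_{-1}, \omega), \qquad w = (x_{-1}, \omega),
\]
for $i = \full$, with the analogous $z(\omega, a) \defeq \pi_\Omega(\omega) \, \sigma(a \mid \omega)$ for $i = \no$. Normalization follows immediately by summation. The flow constraint of $\mathsf{LP}(i)$ is obtained by unfolding $z$, using the indicator structure of $p(w \mid \hat{w}, \hat{a})$ to pin down $x_{-1} = (\hat{\omega}, \hat{a})$, and then invoking the balance equation~\eqref{eq:balance} for $\pi$. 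The LP's persuasiveness inequality is the receiver's posterior persuasiveness condition (Bayes' rule) multiplied through by its positive normalizer and by $\pi(x_{-1})$, recognizing that $D(x, w) = \ind\{x = x_{-1}\}$ selects the correct history slice. Finally, objective values coincide because $\sum_{x_{-1}} z(x_{-1}, \omega, a) = \pi(\omega, a)$ by invariance.

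For the reverse direction, given feasible $z^*$, I would define $\sigma_i$ by the ratio formula in the statement and $\pi(w) \defeq \sum_a z^*(w, a)$. The flow constraint then reads exactly as the balance equation for the Markov chain induced by $\sigma_i$, so $\pi \in \steady(\sigma_i)$. Dividing the LP persuasiveness constraint at $(x, a, a')$ by the normalizer $\sum_\omega z^*((x,\omega), a)$ recovers the posterior persuasiveness condition under the receivers' belief prescribed by $\Phi_i$ (via the invariant distribution in the $\no$ case, and via conditioning on $x_{-1}$ in the $\full$ case), so $\sigma_i \in \pers(\Phi_i)$. Objective values match by construction, and optimality transfers because the two constructions are mutual inverses on the non-degenerate part.

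The main subtlety I expect to handle is the degenerate case $\sum_{a'} z^*(w, a') = 0$, at which $\sigma_i(\cdot \mid w)$ is undefined by the ratio. At such $w$ one has $\pi(w) = 0$, so the choice of $\sigma_i(\cdot \mid w)$ contributes $\pi(w) \sigma_i(\cdot \mid w) = 0$ to the flow equation and cannot disturb invariance; the LP persuasiveness inequality at any $(x, a, a')$ with $\sum_\omega z^*((x,\omega), a) = 0$ is the trivial $0 \geq 0$, and the fallback of recommending a receiver-preferred action at $w$ guarantees that $\sigma_i$ is persuasive there by a separate direct check. Combined with Assumption~\ref{as:unichain}, which ensures that the invariant distribution on the induced state space is uniquely determined so that the objective is unambiguous, this closes the correspondence and yields the claimed optimal-solution recovery formula.
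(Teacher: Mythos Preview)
Your proposal is correct and follows essentially the same two-direction correspondence as the paper's proof: map a persuasive $\sigma$ with its invariant $\pi$ to $z$ via the product formula, and conversely recover $\sigma_i$ from a feasible $z^*$ by the ratio, checking invariance, persuasiveness, and the degenerate case exactly as the paper does. One small inaccuracy: Assumption~\ref{as:unichain} only guarantees a unique ergodic class for mechanisms in $\Sigma_0$, not for $\sigma_i \in \Sigma_1$ in the $\full$ case, so uniqueness of the invariant distribution is not automatic there; however, the argument still closes because the $\pi$ you construct is \emph{an} invariant distribution achieving the LP objective, and the sender-preferred convention for $\steady(\sigma_i)$ then gives at least that payoff, which combined with the forward direction forces equality.
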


Note that the linear program~$\mathsf{LP}(i)$ has $\bigO(|\mathcal{W}_i|\cdot|A|)$ variables and $\bigO(|A|^2\cdot |\pair_i| + |\mathcal{W}_i|)$ constraints. Thus, for the no-history information model $\Phi_\no$, the linear program $\mathsf{LP}(\no)$ has $\bigO(|\Omega| \cdot |A|)$ variables and $\bigO(|A|^2 +|\Omega|)$ constraints, whereas for the full-history information model $\Phi_\full$, the linear program $\mathsf{LP}(\full)$  has $O(|\Omega|^2\cdot |A|^2)$ variables and $O(|\Omega|^2 \cdot |A|^3)$ constraints. Together, the preceding result establishes that, not only the sender's problem in the benchmark models has a simple LP formulation, but also that they can be solved efficiently. As we discuss later in Section~\ref{sec:intricacies}, this is in stark contrast to the case under partial-history information models.

% \kidelete{\begin{lemma}\label{lem:no-lp}
%   The optimal signaling mechanism under the no-history information
%   model $\Phi_\no$ can be obtained by solving the linear
%   program~\eqref{lp:no-info}. In particular, for any optimal solution
%   $z^*$ to the linear program~\eqref{lp:no-info}, the signaling
%   mechanism $\sigma^*$ with
% \begin{align*}
%   \sigma^*(a|\omega) = \frac{z^*(\omega, a)}{\sum_{a'}z^*(\omega, a')} \quad \text{for all $\omega \in \Omega, a\in A$},
% \end{align*}
% is optimal under $\Phi_\no$.
% \end{lemma}
% \begin{lemma}\label{lem:full-lp}
%   The optimal signaling mechanism in full-history information model
%   $\Phi_\full$ can be found by solving the linear
%   program~\eqref{lp:full-info}. In particular, for any optimal
%   solution $z^*$ to the linear program~\eqref{lp:full-info}, the
%   signaling mechanism $\sigma^*$ with
%   \begin{align*}
%     \sigma^*(a|\omega_{-1}, a_{-1}, \omega) &= \frac{z^*(\omega_{-1},
%                                               a_{-1}, \omega,a
%                                               )}{\sum_{a'}
%                                               z^*(\omega_{-1}, a_{-1},
%                                               \omega,a')} \quad
%                                               \text{for all
%                                               $\omega_{-1}, \omega \in
%                                               \Omega, a_{-1}, a \in A$}, 
%   \end{align*}
%   is optimal under $\Phi_\full$.
% \end{lemma}}

\subsection{Ordering and Bounding Partial-history Information Models}
\label{sec:analysis-partial-info}
Our next result justifies our choice of the two benchmarks, by showing that there is a natural nested order relating the different information models, with the two benchmark models occupying the extremes. 
\begin{lemma}\label{lem:opt}
  For $\ell \geq 0$, we have $\pers(\Phi_\full) \subseteq \pers(\Phi_\ell) \subseteq \pers(\Phi_{\ell+1}) \subseteq \pers(\Phi_\no)$ and consequently, $\opt(\Phi_\full) \leq \opt(\Phi_\ell) \leq \opt(\Phi_{\ell+1}) \leq  \opt(\Phi_\no)$. 
\end{lemma}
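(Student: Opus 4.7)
The plan is to recognize persuasiveness as a conditional-expectation inequality under the induced process measure $\prob^\sigma$, where the conditioning sigma-algebra becomes finer as the receiver's information grows, and then to invoke the tower property of conditional expectation.

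\textbf{Step 1: Identify beliefs with conditional distributions.} Fix $\sigma\in\Sigma$. Under Assumption~\ref{as:unichain}, $\sigma$ (together with receivers following recommendations) induces a unique stationary two-sided process under a measure $\prob^\sigma$. Reading off the definitions of the information models, the receiver's belief $\phi_t$ over $(\bar{h}_t,\bar{\omega}_t)$ coincides with the $\prob^\sigma$-conditional distribution of $(\bar{h}_t,\bar{\omega}_t)$ given an appropriate sub-sigma-algebra $\mathcal{F}^t_\Phi$: for $\Phi_\full$ the conditioning is on $\bar{h}_t$ itself; for $\Phi_\ell$ it is on $\bar{h}_{t-\ell}$ (since by construction $P^\ell_\sigma\otimes P$ is exactly the $\prob^\sigma$-conditional law of $(\bar{x}_{t-\ell},\dots,\bar{x}_{t-1},\bar{\omega}_t)$ given $\bar{h}_{t-\ell}$); and for $\Phi_\no$ it is the trivial sigma-algebra, consistent with $\phi_t=\steady(\sigma)\otimes P$ being the stationary marginal. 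These sigma-algebras are nested:
\begin{align*}
\mathcal{F}^t_{\Phi_\full}\ \supseteq\ \mathcal{F}^t_{\Phi_\ell}\ \supseteq\ \mathcal{F}^t_{\Phi_{\ell+1}}\ \supseteq\ \mathcal{F}^t_{\Phi_\no}.
\end{align*}

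\textbf{Step 2: Reformulate persuasiveness.} Using the identification in Step~1 and clearing the (non-negative) Bayes normalizer, the persuasiveness condition $\sum_\omega F(\omega\mid a;\phi_t,\sigma)\,{\partial u}(\omega,a,a')\ge 0$ is equivalent to
\begin{align*}
\expec^\sigma\!\left[\ind\{\bar{s}_t=a\}\,{\partial u}(\bar{\omega}_t,a,a')\ \big|\ \mathcal{F}^t_\Phi\right]\ \geq\ 0\qquad\text{a.s., for all }a,a'\in A.
\end{align*}
Action recommendations with zero posterior probability contribute a $\prob^\sigma$-null event to the indicator and are automatically handled.

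\textbf{Step 3: Invoke the tower property and conclude.} Whenever $\mathcal{F}'\subseteq\mathcal{F}$ and $\expec^\sigma[X\mid\mathcal{F}]\geq 0$ a.s., the identity $\expec^\sigma[X\mid\mathcal{F}']=\expec^\sigma\!\bigl[\expec^\sigma[X\mid\mathcal{F}]\mid\mathcal{F}'\bigr]$ yields $\expec^\sigma[X\mid\mathcal{F}']\geq 0$ a.s. Chaining this along the nested sigma-algebras of Step~1 gives $\pers(\Phi_\full)\subseteq\pers(\Phi_\ell)\subseteq\pers(\Phi_{\ell+1})\subseteq\pers(\Phi_\no)$. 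For the payoff ordering, observe that in~\eqref{info-model} the objective $\expec^\pi[v(\omega,a)]$ with $\pi=\steady(\sigma)$ depends only on $\sigma$, while the information model enters solely through the feasible set $\pers(\Phi)\cap\Sigma$; hence the $\pers$-inclusions immediately yield $\opt(\Phi_\full)\leq\opt(\Phi_\ell)\leq\opt(\Phi_{\ell+1})\leq\opt(\Phi_\no)$.

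The main (and essentially only) obstacle is the bookkeeping in Step~1: one must verify carefully that the pointwise specifications of $\phi_t$ in the paper (with the tensor-product notation $e_{h_{-\ell}}\otimes P^\ell_\sigma\otimes P$, etc.) really are versions of the regular conditional probability of $(\bar{h}_t,\bar{\omega}_t)$ under $\prob^\sigma$ given the corresponding $\mathcal{F}^t_\Phi$, and that the persuasiveness inequality of the paper matches the conditional form above on a full-measure set. Once this identification is in place, Steps~2 and~3 reduce the lemma to a single application of the tower property.
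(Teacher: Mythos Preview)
Your proof is correct and captures the same core idea as the paper's, namely that persuasiveness under finer receiver information implies persuasiveness under coarser information because one can average over the extra information. The presentations differ: the paper argues operationally, by augmenting $\sigma$ with a signal that truthfully reveals the missing slice $\bar{h}_t^\ell$, observing that the receiver in $\Phi_\ell$ then has the same information as in $\Phi_\full$ and so follows the recommendation for every realization of the revealed slice, and concluding that the recommendation remains optimal without the revelation. You instead cast each information model as a sub-$\sigma$-algebra of the stationary process, rewrite persuasiveness as $\expec^\sigma[\ind\{\bar{s}_t=a\}\,{\partial u}(\bar{\omega}_t,a,a')\mid\mathcal{F}^t_\Phi]\geq 0$, and invoke the tower property along the nested filtration. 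Your route is cleaner and more general (it avoids constructing auxiliary mechanisms and makes the monotonicity in $\ell$ immediate), at the cost of the measure-theoretic bookkeeping you flag in Step~1; the paper's route is more elementary and self-contained within the paper's notation. Both yield the payoff ordering identically, since the objective in~\eqref{info-model} depends only on $\sigma$ through $\steady(\sigma)$.
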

Intuitively, the result follows from the fact that with less information available to the receivers, the sender's ability to persuade them improves. Formally, this result is established by
showing, e.g., that any signaling mechanism that is persuasive under the model $\Phi_\full$ remains persuasive under the model $\Phi_\no$, because the sender can always share additional historical information if needed. Thus, the result implies a trade-off: by choosing the optimal signaling mechanism for the model $\Phi_\full$, the sender can simultaneously be
persuasive for all the partial-history information models $\Phi_{\ell}$, but at the
cost of lower payoffs. We illustrate the magnitude of this trade-off in the following example.
\begin{example}\label{ex:opt}
Consider a setting with $\Omega = \{0, 1\}$ and $A = \{0,1\}$. 
The receivers' utility is given by $u(\omega, a) = \ind \{\omega=a\}$, i.e., the receiver desires to match the action with the state. The sender strictly prefers the receiver choosing
action $a =1 $ over action $a=0$ in all states, i.e., $v(\omega, a)
= \ind \{a=1\}$ for all $\omega$. The transition probabilities are such that when taking action $a=0$, the state remains the same with probability $0.8$ and switches with probability $0.2$, whereas when taking action $a=1$, the state switches with probability $0.8$ and stays the same with probability $0.8$. By solving the LP formulations in the preceding section, we find that the sender's optimal payoff in the no-history information model equals $\opt(\Phi_\no) = 1$, i.e., when the receivers have no historical information, the sender can persuade the receivers to always choose her preferred action $a=1$. On the other hand, when the receivers can observe the complete history, the sender obtains a strictly lower payoff, namely $\opt(\Phi_\full) = 0.52$.
% The optimal sender's payoff in the
% partial-history $\Phi_1$ using signaling mechanism $\sigma \in
% \Sigma_k$ is given in the following table.\kicomment{What is the point
%   you're making using this example? Needs to be stated explicitly.}
% \begin{table}[]
%     \centering
%     \begin{tabular}{c|c}
%     \hline
%     $\Phi_\no$ & 1\\
%     $\Phi_\full$ & 0.52 \\
%     \hline
%     % $\Phi_1, \sigma(\cdot|h_1, \omega)$ & 0.772\\
%     % $\Phi_1, \sigma(\cdot|h_2, \omega)$ & 0.799\\
%     % $\Phi_1, \sigma(\cdot|h_3, \omega)$ & 0.808\\
%     % $\Phi_1, \sigma(\cdot|h_4, \omega)$ & 0.811\\
%     % \hline
%     \end{tabular}
%     \caption{Optimal sender's utility in $\Phi_1$}
%     \label{tab:opt-partial}
% \end{table}
\end{example}

Thus, the example shows that, in general, the sender's optimal payoff significantly depends on the level of historical information the receivers possess. A natural question then is whether there are conditions under which historical information does not affect the sender's ability to persuade the receivers. The following proposition characterizes one such sufficient condition.

To state the result, we need some definitions. Let $\sigma \in \Sigma_0$ denote an optimal signaling mechanism in the no-history information model $\Phi_\no$. Note that under $\Phi_\no$, the receivers' prior beliefs equal the invariant distribution $\pi = \steady(\sigma)$. For any action $a \in A$ recommended by the optimal signaling mechanism, let $\mu_a$ denote the resulting posterior belief of such a receiver, and let $\mathcal{B}_\no$ denote the set of all posterior beliefs so induced. Finally, let $\conv(\mathcal{B}_\no)$ denote the convex hull of $\mathcal{B}_\no$.   

\begin{proposition}\label{prop:full-equals-no}
Suppose the set of posterior beliefs $\mathcal{B}_\no$ induced by the optimal signaling mechanism in the no-history information model $\Phi_\no$ is linearly independent. Furthermore, suppose the transition kernels lie in the convex hull of these beliefs, i.e., we have $p(\cdot | x) \in \conv(\mathcal{B}_\no)$ for all $x\in \pair$. Then, we have $\opt(\Phi_\no)=\opt(\Phi_\full)$.
\end{proposition}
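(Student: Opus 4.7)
The plan is to combine Lemma~\ref{lem:opt} (which already gives $\opt(\Phi_\full) \leq \opt(\Phi_\no)$) with an explicit construction of a $\Sigma_1$ mechanism that is persuasive under $\Phi_\full$ and attains payoff $\opt(\Phi_\no)$. To set up, let $\sigma^* \in \Sigma_0$ be an optimal no-history mechanism furnished by Lemma~\ref{lem:no-full-reduction}, and let $\pi^* = \steady(\sigma^*)$ with marginals $\alpha^*_a \defeq \sum_\omega \pi^*(\omega, a)$, so that $\pi^*(\omega, a) = \alpha^*_a \mu_a(\omega)$ for the induced posteriors $\mu_a \in \mathcal{B}_\no$. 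By the convex hull assumption, for each $x \in \pair$ we can write $p(\cdot \mid x) = \sum_a \lambda_{x,a} \mu_a$ with $\lambda_{x,a} \geq 0$ summing to $1$ (uniquely so, because $\mathcal{B}_\no$ is linearly independent).

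Next, I would define $\tilde\sigma \in \Sigma_1$ via
\[
\tilde\sigma(a \mid x, \omega) \defeq \frac{\lambda_{x,a}\, \mu_a(\omega)}{p(\omega \mid x)} \quad \text{whenever } p(\omega \mid x) > 0,
\]
extended arbitrarily (e.g., as the receiver's preferred action) on the null set of $(x, \omega)$. That this is a valid probability distribution over $a$ follows immediately from $\sum_a \lambda_{x,a} \mu_a(\omega) = p(\omega\mid x)$. A direct Bayes calculation under $\Phi_\full$ then shows that a receiver observing $\bar x_{t-1} = x$ together with recommendation $a$ has posterior exactly $\mu_a$ on $\bar\omega_t$. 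Since $a$ was persuasive for the belief $\mu_a$ under $\sigma^*$, the mechanism $\tilde\sigma$ inherits persuasiveness under $\Phi_\full$.

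The principal step---and the one where linear independence does the essential work---is to certify that $\pi^*$ survives as an invariant marginal under $\tilde\sigma$, i.e.,
\[
\sum_{x \in \pair} \pi^*(x)\, p(\omega\mid x)\, \tilde\sigma(a\mid x, \omega) \;=\; \pi^*(\omega, a) \quad \text{for all } (\omega, a) \in \pair.
\]
Substituting $\tilde\sigma$ collapses the left-hand side to $\mu_a(\omega)\sum_x \pi^*(x)\lambda_{x,a}$, so it suffices to prove $\sum_x \pi^*(x)\lambda_{x,a} = \alpha^*_a$ for every $a$. I would obtain this by expanding the state marginal $\pi^*(\omega') = \sum_x \pi^*(x) p(\omega'\mid x)$ in two ways---once via $p(\omega'\mid x) = \sum_a \lambda_{x,a}\mu_a(\omega')$, and once via $\pi^*(\omega') = \sum_a \alpha^*_a \mu_a(\omega')$---and equating them to obtain
\[
\sum_{a \in A} \mu_a(\omega') \Bigl[ \sum_{x} \pi^*(x) \lambda_{x,a} - \alpha^*_a \Bigr] = 0 \quad \text{for all } \omega' \in \Omega.
\]
Linear independence of $\mathcal{B}_\no$ in $\mathbb{R}^{|\Omega|}$ then forces each bracketed coefficient to vanish, completing this step.

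With $\pi^*$ identified as an invariant distribution on $\pair$ under $\tilde\sigma$, the sender-preferred selection $\steady(\tilde\sigma)$ yields reward at least $\sum_{(\omega,a)} \pi^*(\omega,a) v(\omega,a) = \opt(\Phi_\no)$, hence $\opt(\Phi_\full) \geq \opt(\Phi_\no)$. Combining with Lemma~\ref{lem:opt} yields the claimed equality. The main obstacle I anticipate is precisely the invariance calculation above: it is easy to see that $\tilde\sigma$ reproduces the right posterior beliefs, but reproducing the right steady-state action frequencies requires the linear independence hypothesis to transfer equality of state marginals into equality of the finer state-action marginals.
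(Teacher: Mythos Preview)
Your proposal is correct and follows essentially the same approach as the paper: the same $\Sigma_1$ mechanism $\tilde\sigma(a\mid x,\omega)=\lambda_{x,a}\mu_a(\omega)/p(\omega\mid x)$ is constructed, persuasiveness under $\Phi_\full$ is verified by showing the posterior equals $\mu_a$, and invariance of $\pi^*$ is established via the same linear-independence argument applied to the identity $\sum_a\bigl[\sum_x\pi^*(x)\lambda_{x,a}-\alpha^*_a\bigr]\mu_a(\omega')=0$. The notation differs ($\alpha^*_a$ for the paper's $\tau_a$, $\lambda_{x,a}$ for $\lambda(a\mid x)$), but the logic and structure are identical.
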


The proposition, together with Lemma~\ref{lem:opt}, implies that when the conditions in the proposition statement hold, the sender can achieve the same optimal payoffs no matter the level of historical information possessed by the receivers. In particular, we have $\opt(\Phi_\ell) =\opt(\Phi_\no)$ for all $\ell \geq 0$. The proof 
of the proposition starts by writing the transition probabilities $p(\omega| x)$ as a convex combination $\sum_{a\in A} \lambda(a|x)\mu_a(\omega)$ of the beliefs in $\mathcal{B}_\no$. Then, we use these weights $\lambda(a|x)$ to explicitly construct a signaling mechanism $\widehat{\sigma} \in \Sigma_1$ which is persuasive for the model $\Phi_\full$, and induces the same set $\mathcal{B}_\no$ of posterior beliefs for the receivers with the same distribution, resulting in the same payoff for the sender. Finally, observe that the beliefs $\mu_a \in \mathcal{B}_\no$ can be easily computed from the optimal solution of the linear program~$\mathsf{LP}(\no)$; thus, the sufficient conditions in the proposition statement are straightforward to verify.

\section{Optimal Persuasion in Partial-history Information Models via Robustness }\label{sec:robustness}

We now turn to the study of optimal persuasion in the general partial-history information model. We first discuss the technical intricacies in finding an optimal signaling mechanism for $\mpp(\Phi_\ell)$, and the associated computational challenges for the problem $\mpp(\Phi_\ell, \Sigma_k)$, which we formulate as a bilinear optimization program. Given these challenges, we design an approximately optimal signaling mechanism for $\mpp(\Phi_\ell)$ for large enough $\ell$, that is ``simple'' in the sense that it is history-independent and computationally efficient. Our key idea is to  leverage the fast mixing property of underlying Markov chains, whereby after sufficiently many transitions, the state distribution will be close to, though not exactly the same as, the invariant distribution. To guarantee persuasiveness for this distribution, it suffices for our design to simply guarantee robust persuasiveness for every belief that is close to the invariant distribution. We show that such robust persuasiveness can be employed to yield a simple and approximately optimal persuasion signaling mechanism for reasonably large $\ell$.

\subsection{Intricacies of Persuasion in Partial-history Information Models}\label{sec:intricacies}

Consider the sender's persuasion problem $\mpp(\Phi_\ell)$ in the partial-history information model $\Phi_\ell$ for general $\ell \geq 1$. In these models, the receivers neither have complete information about the history, nor do they completely lack history information. As we show next, this intermediate level of historical information makes the sender's persuasion problem challenging and technically intricate. In fact, even determining the degree of history dependence of the optimal signaling mechanism is difficult. This intricacy presents itself even in the simplest partial-history information model, namely $\Phi_1$, as we explain next. 

Recall that in the model $\Phi_1$, the receivers observe the history with one-period lag, i.e., at time $t$, the corresponding receiver observes the history  $\bar{h}_{t-1}$ at time $t-1$. Thus, this receiver knows the realization of $\bar{x}_{t-2} = (\bar{\omega}_{t-2}, \bar{a}_{t-2})$, but does not know $\bar{x}_{t-1} = (\bar{\omega}_{t-1}, \bar{a}_{t-1})$ and $\bar{\omega}_t$. An initial guess then is to consider signaling mechanisms in the set $\Sigma_2$, i.e., ones that make recommendations based on $(\bar{x}_{t-2}, \bar{x}_{t-1}, \bar{\omega}_t)$.  (This comports well with the full-history information model $\Phi_\full = \Phi_0$, where the optimal signaling mechanism lies in the set $\Sigma_1$.) With such a choice of $\sigma$, after seeing $\bar{h}_{t-2} = h_{-2} = (\dots, x_{-3}, x_{-2})$,  the receiver's belief about $(\bar{x}_{t-1}, \bar{\omega}_t)$ is given by 
\begin{align*}
    \prob^\sigma\!\left( \bar{x}_{t-1} = x_{-1}, \bar{\omega}_{t} = \omega \left|\ \bar{h}_{t-2} = h_{-2}\right.\right) = p(\omega_{-1} | x_{-2}) \sigma(a_{-1} | x_{-3}, x_{-2}, \omega_{-1}) p(\omega | x_{-1}).
\end{align*}
Thus, the receiver's belief depends not just on $x_{-2}$, but also on the realization $x_{-3}$ of  $\bar{x}_{t-3}$. Requiring the signaling mechanism $\sigma$ to be persuasive for different beliefs of the receiver corresponding to different realization of $x_{-3}$, without depending on $x_{-3}$ explicitly, is unlikely to yield optimality. Consequently, one is tempted to consider signaling mechanisms $\sigma \in \Sigma_3$, i.e., ones that make recommendation based on $(\bar{x}_{t-3}, \bar{x}_{t-2}, \bar{x}_{t-1}, \bar{\omega}_t)$. However, a similar argument as above would imply that for such signaling mechanisms, the receiver's belief would depend on the realization of $(\bar{x}_{t-4}, \dots \bar{x}_{t-2})$. In general, for any signaling mechanism $\sigma \in \Sigma_k$ with $k \geq 1$, the receiver at time $t$ has a different belief for different realizations of $(\bar{x}_{t-(k+1)}, \dots, \bar{x}_{t-2})$, but the signaling mechanism $\sigma$ does not base its recommendation on the realization of $\bar{x}_{t-(k+1)}$. Due to this mismatch of dependencies, it is unclear what the right dependence of the optimal signaling mechanism is on the history, or for that matter, even whether there exists an optimal signaling mechanism within the class $\Sigma$ of signaling mechanisms.

Given the ambiguity regarding the degree of history dependence, one may instead consider optimizing the sender's payoff within a restricted subset $\Sigma_k$ of signaling mechanisms, for some fixed $k$. However, even this restricted problem turns out to be computationally challenging since, unlike the case for $\Phi_\full$ and $\Phi_\no$, it does not reduce to a linear program. In particular, the following result formulates the sender's problem $\mpp(\Phi_1, \Sigma_1)$ as a bilinear program.
\begin{proposition}\label{prop:partial-bilinear} The sender's problem $\mpp(\Phi_1, \Sigma_1)$ can be formulated as the following bilinear program:
\begin{align}\label{opt:non-linear}
\max_{z \geq 0} \sum_{x_{-2}, x_{-1} \in \pair} \sum_{\omega, a} & z(x_{-2}, x_{-1}, \omega, a) v(\omega, a) \notag\\    
\sum_{x_{-1} \in \pair} \sum_{\omega} z(x_{-2}, x_{-1}, \omega, a) {\partial u}(\omega, a, a') &\geq 0, \quad \text{for all $x_{-2} \in \pair$ and $a, a' \in A$.}\notag\\
\sum_{x_{-3} \in \pair} z(x_{-3}, x_{-2}, x_{-1})p(\omega|x_{-1}) &= \sum_{a \in A} z(x_{-2}, x_{-1}, \omega, a), \quad \text{for all $x_{-2}, x_{-1}\in \pair$ and $\omega \in \Omega$}\notag\\
\sum_{x_{-2}, x_{-1}, \omega, a} z(x_{-2}, x_{-1}, \omega, a) &= 1\notag\\
z(x_{-2}, x_{-1}, \omega, a)\cdot \sum_{a' \in A} z(x'_{-2}, x_{-1}, \omega, a') &= z(x'_{-2}, x_{-1}, \omega, a) \cdot \sum_{a' \in A} z(x_{-2}, x_{-1}, \omega, a'),\notag\\
&\qquad \qquad \qquad \text{for all $x'_{-2}, x_{-2}, x_{-1}, (\omega , a) \in \pair$.} 
\end{align}
% In particular, given an optimal $z$ to this bilinear program, the signaling mechanism $\sigma$, defined as, 
% \begin{align*}
%     \sigma(a| x_{-1}, \omega) \defeq \frac{z(x_{-2}, x_{-1}, \omega, a)}{ \sum_{a' \in A} z(x_{-2}, x_{-1}, \omega, a)}
% \end{align*}
% is optimal for $\mpp(\Phi_1, \Sigma_1)$.
\end{proposition}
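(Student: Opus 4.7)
The plan is to establish a bijective-in-spirit correspondence between feasible solutions $z$ of the program~\eqref{opt:non-linear} and pairs $(\sigma, \pi)$ with $\sigma \in \pers(\Phi_1) \cap \Sigma_1$ and $\pi = \steady(\sigma)$, in which the objectives match. The guiding interpretation is that
\[
z(x_{-2}, x_{-1}, \omega, a) = \pi(x_{-2}, x_{-1})\, p(\omega \mid x_{-1})\, \sigma(a \mid x_{-1}, \omega),
\]
i.e.~the joint probability of $(\bar{x}_{t-2}, \bar{x}_{t-1}, \bar{\omega}_t, \bar{a}_t)$ under the stationary dynamics induced by $\sigma$, where $\pi(x_{-2}, x_{-1})$ denotes the joint invariant distribution of two consecutive state-action pairs. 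The three-argument marginal $z(x_{-3}, x_{-2}, x_{-1}) = \sum_{\omega,a} z(x_{-3}, x_{-2}, \omega_{-1}, a_{-1})$ used in the balance equation is then the invariant joint distribution of three consecutive pairs.

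For the forward direction, I would start with any persuasive $\sigma \in \Sigma_1$ and $\pi = \steady(\sigma)$, define $z$ via the product above, and verify each constraint. Non-negativity and normalization are immediate from $\pi$ being a probability distribution. The balance equation in~\eqref{opt:non-linear} is obtained by applying the balance equation~\eqref{eq:balance} for the chain on $\pair^2$ to $\pi$ and then multiplying by $p(\omega \mid x_{-1})$. The persuasiveness inequality is recovered by multiplying the receiver's persuasiveness inequality (conditioning on the observed slice $\bar{x}_{t-2} = x_{-2}$ under $\Phi_1$) by the non-negative marginal $\pi(x_{-2})$. Finally, the bilinear constraint captures exactly the defining feature of $\Sigma_1$: since $\sigma(a \mid x_{-1}, \omega)$ is independent of $x_{-2}$, the ratio $z(x_{-2}, x_{-1}, \omega, a)/\sum_{a'} z(x_{-2}, x_{-1}, \omega, a')$ is invariant in $x_{-2}$, which after clearing denominators is the stated bilinear identity.

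For the reverse direction, given a feasible $z$, I would define $\sigma(a \mid x_{-1}, \omega)$ to be the ratio $z(x_{-2}, x_{-1}, \omega, a) / \sum_{a'} z(x_{-2}, x_{-1}, \omega, a')$ for any $x_{-2}$ for which the denominator is positive (choosing the receiver's preferred action when all such denominators vanish). The bilinear constraint ensures this definition is unambiguous, so $\sigma \in \Sigma_1$. I would then set $\pi(x_{-2}, x_{-1}) = \sum_{\omega,a} z(x_{-2}, x_{-1}, \omega, a)$ and use the balance equation in~\eqref{opt:non-linear} together with the definition of $\sigma$ to verify that $\pi$ is invariant for the induced Markov chain on $\pair^2$ under $\sigma$; by Assumption~\ref{as:unichain} and the sender-preferred tie-breaking built into $\steady(\sigma)$, we may take $\pi = \steady(\sigma)$. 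Persuasiveness of $\sigma$ under $\Phi_1$ then follows by dividing the bilinear program's persuasiveness inequality by $\pi(x_{-2}) > 0$ (and is vacuous when $\pi(x_{-2}) = 0$, since such $x_{-2}$ occurs with probability zero under the invariant distribution), and the objectives match by construction.

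The main obstacle I anticipate is the reverse direction, and specifically the careful handling of two issues that do not arise in the LP formulations of Proposition~\ref{prop:lp-formulations}. First, one must treat the degenerate cases where $\sum_{a'} z(x_{-2}, x_{-1}, \omega, a') = 0$ for some (or all) $x_{-2}$, so that $\sigma$ must be defined off-support in a manner consistent with persuasiveness; this requires the same kind of off-support completion used in Proposition~\ref{prop:lp-formulations} but must now be verified to respect the bilinear constraint. Second, one must link the invariant distribution $\pi$ extracted algebraically from $z$ to the correct element of the set of invariant distributions of the induced chain, which is where the sender-preferred selection in $\steady(\sigma)$ and the unichain assumption come in. The bilinear constraint is the essential non-convex feature that encodes the restriction $\sigma \in \Sigma_1$ and cannot be linearized without losing solutions, which is consistent with the discussion in Section~\ref{sec:intricacies} on the computational difficulty of the partial-history problem.
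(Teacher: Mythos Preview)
Your proposal is correct and follows essentially the same approach as the paper: both directions establish the correspondence via $z(x_{-2}, x_{-1}, \omega, a) = \pi(x_{-2}, x_{-1})\, p(\omega \mid x_{-1})\, \sigma(a \mid x_{-1}, \omega)$, use the bilinear constraint to ensure the recovered $\sigma$ is well-defined in $\Sigma_1$, and handle off-support $(x_{-1},\omega)$ by defaulting to the receiver-optimal action. The only cosmetic difference is that the paper defines the recovered invariant distribution as $\pi(x_{-1}, \omega, a) = \sum_{x_{-2}} z(x_{-2}, x_{-1}, \omega, a)$ rather than your $\pi(x_{-2}, x_{-1}) = \sum_{\omega,a} z(x_{-2}, x_{-1}, \omega, a)$, but by stationarity these marginals coincide.
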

To elaborate, for any $\sigma \in \Sigma_1$, assuming the receivers follow the recommendation and given our preceding discussion, the underlying process dynamics can be described as a Markov chain with states given by slices $\bar{h}_t^2 = (\bar{x}_{t-2},\bar{x}_{t-1})$. The balance equation for this chain's invariant distribution $\pi \in \steady(\sigma)$ is given by 
\begin{align*}
    \sum_{x_{-2} \in \pair} \pi(x_{-2}, x_{-1}) p(\omega|x_{-1}) \sigma( a | x_{-1}, \omega) = \pi(x_{-1}, \omega, a),
\end{align*}
for all $x_{-1} \in \pair$ and $(\omega, a) \in \pair$. As in the case for $\Phi_\full$ and $\Phi_\no$, this equation is non-linear in $\pi$ and $\sigma$. However, introducing the variables $z(x_{-2}, x_{-1}, \omega, a) \defeq \pi(x_{-2}, x_{-1}) p(\omega|x_{-1}) \sigma( a | x_{-1}, \omega)$ no longer yields a linear program, because of the restriction that $\sigma$ cannot depend on $x_{-2}$. In particular, the non-linear equality constraint in \eqref{opt:non-linear} explicitly encodes the requirement that $z(x_{-2}, x_{-1}, \omega, a)/ \sum_{a' \in A} z(x_{-2}, x_{-1}, \omega, a)$ is independent of $x_{-2}$. 

A similar argument holds for any partial-history information model $\Phi_\ell$, where for any signaling mechanism $\sigma \in \Sigma_k$ for $k \geq 0$, the belief of a receiver at time $t$ depends on the realization of $(\bar{x}_{t-\ell-k}, \dots, \bar{x}_{t-\ell-1})$, but the signaling mechanism $\sigma$ does not base its recommendation on $(\bar{x}_{t-\ell-k}, \dots, \bar{x}_{t-k-1})$. Again, the sender's problem $\mpp(\Phi_\ell, \Sigma_k)$ can be shown to be a bilinear optimization problem, whose size is exponential in $\ell+k$. We skip the details for the sake of brevity.

The preceding discussion hints at a trade-off faced by the sender in the model $\Phi_\ell$ for some $\ell \geq 1$. On one hand, the sender can adopt the optimal signaling mechanism for the full-history information model $\Phi_\full$, which is simple in that it only uses the previous state-action pair (and the current state) to recommend an action, and is persuasive for the model $\Phi_\ell$, as shown in Proposition~\ref{lem:opt}.  However, this simplicity may come at the cost of substantially lower payoffs, especially if $\ell$ is large. On the other hand, the sender may choose a large $k$ and solve a non-linear program akin to \eqref{opt:non-linear} to find the best signaling mechanism within the class $k$, which likely will yield higher payoffs, at the cost of substantial computational complexity. (See e.g., Fig~\ref{fig:exopt}.) In the following section, we provide an approach to overcome this trade-off, as long as one is satisfied with approximate optimality.

\begin{figure}[t]
\centering
\begin{tabular}{cccccccccccc}
\toprule
 $k$& & & 0 & & 1 & &  2 & & 3 & & 4 \\
 \midrule
 $\opt(\Phi_1, \Sigma_k)$ & & & $0.576$ &  & $0.772$ & &  $0.799$ & &  $0.808$ & & $0.811$\\
 \bottomrule
\end{tabular}
\caption{(Example~\ref{ex:opt} contd.) Sender's optimal payoff in $\mpp(\Phi_1, \Sigma_k)$ for different values of $k$. The optimal values are obtained by numerically solving bilinear optimization programs analogous to \eqref{opt:non-linear} for different values of $k$. Here, $\opt(\Phi_\full) = 0.52$ and $\opt(\Phi_\no) = 1$.}  
\label{fig:exopt}
\end{figure}
% \begin{table}[htbp]
%     \centering
%     \begin{tabular}{ccc}
%     \toprule
%     $k$ & &$\opt(\Phi_1, \Sigma_k)$ \\
%     \midrule
%     $0$ & &$0.576$\\
%     $1$ & &$0.772$\\
%     $2$ & &$0.799$\\
%     $3$ & &$0.808$\\
%     $4$ & &$0.811$\\
%     \bottomrule
%     \end{tabular}    
%     \caption{Optimal sender's utility in the partial-history information model $\Phi_1$}
%     \label{table:exm1}
% \end{table}

\subsection{Approximately Optimal Persuasion via Robustness}

In this section, we ask and answer the following questions: in partial-history information models, can ``simple'' signaling mechanisms guarantee persuasiveness without sacrificing the sender's payoff too much? And if so, can we find such a mechanism in a computationally efficient manner? To answer these questions positively, we take an approach inspired from robust persuasion~\citep{zu2021learning}. Our starting point is the observation that, for a signaling mechanism $\sigma$, if the underlying Markov chain mixes rapidly, the belief of the receiver who has stale historical information must be close to the invariant distribution $\pi =\steady(\sigma)$. Thus, if $\sigma$ is simultaneously persuasive for all distributions close to $\pi$, it must be persuasive under the information models $\Phi_\ell$ for all large enough $\ell$. Using this insight, we explicitly construct a {\em robustly persuasive} history-independent signaling mechanism with good payoff guarantees.

To begin, recall that for any history-independent signaling mechanism $\sigma \in \Sigma_0$, assuming the receivers follow the recommendation, $\bar{x}_t = (\bar{\omega}_t, \bar{a}_t) \in \pair$ forms a Markov chain. Let $\pi = \steady(\sigma)$; we abuse the notation slightly by letting $\pi$ also denote the marginal over $\bar{\omega}_t$, i.e., $\pi(\omega) = \sum_{a \in A} \pi(\omega, a)$ for $\omega \in \Omega$. For $\epsilon \geq 0$, let $\mathsf{B}_1(\pi,\epsilon)$ denote the set of all distributions $\mu \in \Delta(\Omega)$ that are $\epsilon$-close to $\pi$ in $\ell_1$-norm:  $\mathsf{B}_1(\pi,\epsilon) \defeq \{ \mu \in \Delta(\Omega) : \| \mu - \pi\|_1 \leq \epsilon\}$. 

An $\epsilon$-{\em robustly persuasive} signaling mechanism $\sigma \in \Sigma_0$ is one whose recommendations would be optimally adopted by any receiver whose prior belief about $\bar{\omega}_{t}$ lies in the set $\mathsf{B}_1(\pi,\epsilon)$:
\begin{align*}
  \sum_{\omega} \mu(\omega) \sigma(a |  \omega) {\partial u}(\omega, a, a') \geq 0, \text{ for all $a, a' \in A$ and all $\mu \in \mathsf{B}_1(\pi,\epsilon)$,}
\end{align*}
where $\pi =\steady(\sigma)$. We denote the set of $\epsilon$-robustly persuasive signaling mechanisms by $\rp(\epsilon)$. The value of $\epsilon$ captures the degree of robustness of a mechanism $\sigma \in \rp(\epsilon)$, with smaller values corresponding to lower robustness. Observe that for all $\epsilon \geq 0$, we have $\rp(\epsilon) \subseteq \pers(\Phi_\no) \cap \Sigma_0$, with equality for $\epsilon=0$. Furthermore, the set $\rp(\epsilon)$ is non-empty for all $\epsilon \geq 0$, as it contains the signaling mechanism that recommends an receiver-optimal action at each state.

Our next result describes the relation between $\rp(\epsilon)$ and the set $\pers(\Phi_\ell)$ for large $\ell$. For $\sigma \in \Sigma_0$ and $\ell \geq 1$, let $Q_\sigma^\ell(x, \omega) \defeq \prob^\sigma\!\left( \bar{\omega}_{\ell} = \omega | \bar{x}_{-1} = x\right)$ denote the distribution of $\bar{\omega}_\ell$ under $\sigma$, given $\bar{x}_{-1} = x \in \pair$. Define $d_\ell(\sigma)$ as the maximum $\ell_1$-distance between $Q_\sigma^\ell(x)$ and $\pi = \steady(\sigma)$ over $x \in\pair$:
\begin{align*}
    d_\ell(\sigma) \defeq \sup_{x \in \pair} \left\| Q^\ell_\sigma(x) - \pi\right\|_1 = \sup_{x \in \pair} \sum_{\omega} \left| Q^\ell_\sigma(x, \omega) - \pi(\omega)\right|.
\end{align*}
Finally, let $\gamma_\star(\sigma)$ denote the absolute spectral gap~\citep{levin2017markov} of the Markov chain $\{\omega_t\}$ under $\sigma$ and $\pi_{\min}(\sigma) = \min_{\omega} \pi(\omega) > 0$. We have the following result.
\begin{lemma}\label{lem:robust-partial}
    Suppose the signaling mechanism $\sigma \in \Sigma_0$ is $\epsilon$-robustly persuasive for $\epsilon > 0$. If $\ell \geq 0$ satisfies $d_\ell(\sigma) \leq \epsilon$, then $\sigma \in \pers(\Phi_\ell)$. In particular, $\sigma \in \pers(\Phi_\ell)$ for all $\ell \geq \frac{1}{\gamma_\star(\sigma)} \log\left(\frac{2}{\epsilon \pi_{\min}(\sigma)}\right)$.
\end{lemma}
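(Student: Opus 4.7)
The plan is to decompose the argument into two steps: a structural reduction that converts $\Phi_\ell$-persuasiveness into a robustness check against the receiver's marginal current-state belief, and a classical spectral mixing estimate that delivers the explicit $\ell$ threshold by bounding $d_\ell(\sigma)$.

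For the reduction, fix an arbitrary time $t$. By the definition of $\Phi_\ell$, the receiver's prior $\phi_t$ is $e_{h_{-\ell}} \otimes P_\sigma^\ell \otimes P$, where $h_{-\ell} = \bar{h}_{t-\ell}$ is the observed portion of the history. Because $\sigma \in \Sigma_0$ depends only on $\bar{\omega}_t$, one can cancel the positive normalizer in Bayes' rule and reduce the persuasiveness inequality for each $(a,a')$ to
\begin{align*}
\sum_{\omega} \mu_t(\omega)\, \sigma(a|\omega)\, {\partial u}(\omega,a,a') \geq 0,
\end{align*}
where $\mu_t$ is the marginal of $\phi_t$ on $\bar{\omega}_t$. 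Unpacking the tensor product, $\mu_t$ is obtained by starting from $\bar{x}_{t-\ell-1}$, applying the Markov transition $p(\cdot|\bar{x}_{t-\ell-1})$, and then $\ell$ further steps of the induced chain on $\Omega$ with kernel $P^\sigma(\omega'|\omega) = \sum_a \sigma(a|\omega)\, p(\omega'|\omega,a)$; by the very definition of $Q_\sigma^\ell$, this identifies $\mu_t = Q_\sigma^\ell(\bar{x}_{t-\ell-1}, \cdot)$. The hypothesis $d_\ell(\sigma) \leq \epsilon$ then places $\mu_t \in \mathsf{B}_1(\pi, \epsilon)$, and $\epsilon$-robust persuasiveness of $\sigma$ discharges the displayed inequality for every realization at every $t$, proving the first conclusion.

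For the explicit threshold, invoke a standard spectral mixing estimate for the induced chain on $\Omega$: by Assumption~\ref{as:unichain} the chain is irreducible and aperiodic, has invariant distribution $\pi$ and absolute spectral gap $\gamma_\star(\sigma)$, and a classical bound (see, e.g., \citet{levin2017markov}) gives
\begin{align*}
    \max_{\omega_0 \in \Omega} \bigl\| (P^\sigma)^\ell(\omega_0,\cdot) - \pi \bigr\|_1 \leq \frac{2\, e^{-\gamma_\star(\sigma)\ell}}{\pi_{\min}(\sigma)}.
\end{align*}
Since $Q_\sigma^\ell(x,\cdot) = \sum_{\omega_0} p(\omega_0|x)\,(P^\sigma)^\ell(\omega_0,\cdot)$ is a convex combination of these distributions, the triangle inequality transfers the same bound to $d_\ell(\sigma)$. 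Setting the right-hand side $\leq \epsilon$ and solving for $\ell$ produces the stated threshold $\ell \geq \gamma_\star(\sigma)^{-1} \log\bigl(2/(\epsilon\, \pi_{\min}(\sigma))\bigr)$, at which point the reduction above concludes.

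The main obstacle is the careful identification in the reduction step. One must trace the tensor-product description of $\phi_t$ to confirm that (i) the persuasiveness inequality depends on the receiver's joint prior only through its marginal on $\bar{\omega}_t$, which uses history-independence of $\sigma$, and (ii) this marginal collapses, by the Markov property, to exactly $Q_\sigma^\ell(\bar{x}_{t-\ell-1},\cdot)$, with the portion of $\bar{h}_{t-\ell}$ preceding $\bar{x}_{t-\ell-1}$ dropping out. Once this identification is made, the robustness hypothesis discharges the inequality immediately and the mixing step is routine.
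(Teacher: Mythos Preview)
Your proposal is correct and follows essentially the same approach as the paper: first reduce persuasiveness under $\Phi_\ell$ to the observation that the receiver's marginal belief over $\bar{\omega}_t$ is $Q_\sigma^\ell(\bar{x}_{t-\ell-1},\cdot)$ (using history-independence of $\sigma$ and the Markov property), then invoke the spectral mixing bound from \citet{levin2017markov}. Your write-up is in fact more detailed than the paper's in two places---you spell out why the persuasiveness inequality collapses to the marginal on $\bar{\omega}_t$, and you make explicit the convex-combination step $Q_\sigma^\ell(x,\cdot)=\sum_{\omega_0}p(\omega_0\mid x)(P^\sigma)^\ell(\omega_0,\cdot)$ that transfers the state-started mixing bound to $d_\ell(\sigma)$---but the underlying argument is the same.
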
 
The proof of the bound in the lemma statement uses the unichain assumption (Assumption~\ref{as:unichain}) to bound the mixing time of the underlying Markov chain. The result implies that in order to find a signaling mechanism in $\pers(\Phi_\ell)$, it suffices to find a history-independent signaling mechanism in the set $\rp(\epsilon)$ for small enough $\epsilon$. We highlight that the required value of $\epsilon$ decays exponentially in $\ell$, and hence the robustness requirements are not too stringent.

% $t_{\min}^{(1)}(\epsilon) =  \min_{\ell \geq 0} \left\{d_{\ell-1}(\sigma) \leq \epsilon\right\} $.
% $\ell \geq t_\min-1 \implies d_{\ell} \leq \epsilon$ 
% $t_{\min}^{(1)}(\epsilon) = t_{\min}(\epsilon/2) \leq \frac{1}{\gamma_\star}\log(\frac{2}{\epsilon \pi_{\min}})$

Given this preceding result, we seek to identify a robustly persuasive
mechanism with good guarantees on the sender's payoff. We prove such a
result next. To state the result, we need a definition. Define the
sets $\mathcal{P}_a \subseteq \Delta(\Omega)$ as follows:
\begin{align*}
  \mathcal{P}_a \defeq \left\{ \mu \in  \Delta(\Omega)  : a \in \argmax_{a'} \expec_\mu[ u(\omega, a')] \right\}.
\end{align*}
In other words, $\mathcal{P}_a$ is the set of beliefs for which the
receiver finds it optimal to choose action $a$. Similar to
\citet{zu2021learning}, we make the following regularity assumption on the
receivers' utility function.
\begin{assumption}[Regularity]\label{as:regularity}
  There exists a positive constant $D >0$ and beliefs
  $\eta_a \in \mathcal{P}_a$ for $a \in A$ such that
  $\mathsf{B}_1(\eta_a, D) \subseteq \mathcal{P}_a$ for each
  $a \in A$, where $\mathsf{B}_1(\eta, D)$ is an $\ell_1$-ball of size
  $D$ centered at $\eta$.
\end{assumption}
The regularity assumption ensures that each action for the receiver is
optimal for a set of beliefs with non-zero (Lebesgue) measure. This
ensures the exclusion of pathological instances, where there is an
action that is optimal for the receiver under a unique belief.
Furthermore, \citet{zu2021learning} establish that the regularity assumption
ensures that, in static problems, the cost of requiring robustness
scales linearly in the degree of robustness.

Next, let $a_{\omega} \in A$ be a best response for a receiver at
state $\omega \in \Omega$, i.e.,
$a_\omega \in \arg\max_{a \in A} u(\omega, a)$ for each
$\omega \in A$. Let
$P_f(\omega, \omega') \defeq p(\omega'|\omega, a_\omega)$ denote the
transition probability from state $\omega$ to state $\omega'$ on
choosing the action $a_\omega$, and let $P_f$ denote the transition
matrix of the underlying process. Note that the unichain assumption
implies that $P_f$ is ergodic. 
% \begin{assumption}\label{as:fi-transition-matrix}
%   $P_f$ is aperiodic and irreducible\kidelete{ and
%     $rank(I-P_f) = |\Omega|-1$}.
% \end{assumption}
% The preceding assumption ensures that, if each receiver observes the
% current state $\bar{\omega}_t = \omega$ and responds optimally by
% choosing action $\bar{a}_t = a_\omega$, then the process is ergodic
% and converges to a steady state.
Let $\nu_f \in \Delta(\Omega)$ denote the steady state distribution
under the transition kernel $P_f$. Furthermore, let
$\tau \defeq \max_{\omega} 1/ \nu_f(\omega)$ denote the maximum
expected {\em first return time} across all states. Finally, let $s_f$ be the smallest positive singular value of the matrix $I -P_f$.

With these definitions in place, we are now ready to present the main
result of this section.
\begin{theorem}\label{thm:robustness-history-independent}
  For
  $\epsilon < \frac{s_f w_{\min}D}{2(s_f + 2(1+
    \tau)\sqrt{|\Omega|})}$, there exists a signaling mechanism $\widehat{\sigma} \in \rp(\epsilon)$ with
  the sender's payoff bounded below by
  \begin{align*}
    \left(1 - \frac{2 \epsilon}{w_{\min}D} \left(1 + \frac{2(1+ \tau)\sqrt{|\Omega|}}{s_f}    \right) \right) \cdot \opt(\Phi_\no),
  \end{align*}
  where $w_{\min}$ is the smallest positive probability of
  recommending an action under the optimal mechanism under $\Phi_\no$.
  
\end{theorem}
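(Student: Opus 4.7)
The plan is to construct $\widehat{\sigma}$ as a convex combination, at the LP level, of the optimal solution for $\mpp(\Phi_\no)$ and a carefully built \emph{regular reference} solution that sits strictly inside each persuasion region with a margin controlled by $D$. Let $\sigma^* \in \Sigma_0$ be optimal for $\mpp(\Phi_\no)$, with invariant $\pi^*$, LP point $z^*(\omega,a)=\pi^*(\omega)\sigma^*(a|\omega)$, action weights $w_a^* = \sum_\omega z^*(\omega,a)$, and posteriors $\mu_a^* = z^*(\cdot,a)/w_a^*$. Suppose we have a second LP-feasible point $z^r$ whose posteriors $\mu_a^r = z^r(\cdot,a)/w_a^r$ lie inside $\mathsf{B}_1(\eta_a, \rho) \subseteq \mathcal{P}_a$ for some $\rho < D$. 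Setting $\widehat{z} \defeq (1-\beta) z^* + \beta z^r$ preserves LP feasibility because the constraints are linear, so $\widehat{\pi}(\omega) \defeq \sum_a \widehat{z}(\omega,a)$ is the invariant distribution of $\widehat{\sigma}(a|\omega) \defeq \widehat{z}(\omega,a)/\widehat{\pi}(\omega)$ by Proposition~\ref{prop:lp-formulations}.

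The key technical step is building such a $z^r$, for which I would prove an \emph{extension of the splitting lemma to Markov settings}: given regular beliefs $\{\eta_a\}$ from Assumption~\ref{as:regularity}, there exist weights $\{w_a^r\}$ and perturbed posteriors $\mu_a^r$ with $\|\mu_a^r - \eta_a\|_1 \leq \rho$ such that $z^r(\omega,a) \defeq w_a^r \mu_a^r(\omega)$ satisfies the Markov balance equation $\sum_{\omega,a} z^r(\omega,a) p(\omega'|\omega,a) = \sum_a z^r(\omega',a)$. A natural starting point is the fallback solution $z_f(\omega,a) = \nu_f(\omega)\mathbf{1}\{a = a_\omega\}$, which is automatically LP-feasible with invariant $\nu_f$, and then to additively redistribute mass to move the posteriors toward $\{\eta_a\}$ while restoring balance via corrections in the null space of $I - P_f^T$. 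Inverting this balance brings in the smallest positive singular value $s_f$ of $I - P_f$, while bounding the per-coordinate impact of the correction brings in $\tau$ (the maximum expected first-return time under $P_f$) together with the dimension factor $\sqrt{|\Omega|}$, yielding $\rho = O\bigl((1+\tau)\sqrt{|\Omega|}/s_f\bigr)$. The hypothesis $\epsilon < s_f w_{\min} D / (2(s_f + 2(1+\tau)\sqrt{|\Omega|}))$ in the theorem is precisely what ensures $\rho < D$, so that $\mu_a^r$ sits inside $\mathcal{P}_a$ with positive slack.

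To verify $\widehat{\sigma} \in \rp(\epsilon)$, I would decompose, for any $\mu \in \mathsf{B}_1(\widehat{\pi}, \epsilon)$, the persuasion quantity $\sum_\omega \mu(\omega)\widehat{\sigma}(a|\omega)\partial u(\omega, a, a')$ into a main term $\sum_\omega \widehat{z}(\omega, a)\partial u(\omega, a, a')$ plus a perturbation $\sum_\omega(\mu-\widehat{\pi})(\omega)\widehat{\sigma}(a|\omega)\partial u(\omega, a, a')$. The main term is at least $(1-\beta)\cdot 0 + \beta\, w_a^r \cdot \tfrac{1}{2}(D-\rho)\|\partial u(\cdot,a,a')\|_\infty$ by LP feasibility of $z^*$ combined with the regularity slack of $\mu_a^r$; the perturbation is bounded in absolute value by $\epsilon \cdot \|\partial u\|_\infty$ via H\"older. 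Requiring the main term to dominate the perturbation forces $\beta = \Theta\bigl(\epsilon / (w_{\min}(D-\rho))\bigr)$, and substituting the bound on $\rho$ produces exactly the factor $\frac{2\epsilon}{w_{\min}D}\bigl(1 + \frac{2(1+\tau)\sqrt{|\Omega|}}{s_f}\bigr)$. Since rewards are nonnegative, the payoff under $\widehat{\sigma}$ is $(1-\beta)\opt(\Phi_\no) + \beta\sum_{\omega,a} z^r(\omega,a) v(\omega,a) \geq (1-\beta)\opt(\Phi_\no)$, yielding the claimed bound. The main obstacle by far is the Markovian splitting construction in paragraph two: achieving tight dependence on $s_f$ and $\tau$ while keeping the posterior correction inside the $D$-regularity ball is the delicate quantitative core of the proof; the robustness verification and payoff estimate are then straightforward continuity arguments.
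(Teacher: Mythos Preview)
Your proposal has a genuine gap in the construction of the reference solution $z^r$, and it stems from a misidentification of where the factor $\frac{2(1+\tau)\sqrt{|\Omega|}}{s_f}$ enters. You claim that one can build an LP-feasible $z^r$ whose posteriors $\mu_a^r$ satisfy $\|\mu_a^r - \eta_a\|_1 \leq \rho$ with $\rho = O\bigl((1+\tau)\sqrt{|\Omega|}/s_f\bigr)$, and that the hypothesis on $\epsilon$ ``is precisely what ensures $\rho < D$.'' But the hypothesis $\epsilon < \frac{s_f w_{\min}D}{2(s_f + 2(1+\tau)\sqrt{|\Omega|})}$ constrains $\epsilon$, not $(1+\tau)\sqrt{|\Omega|}/s_f$; it does \emph{not} force the latter quantity below $D$, and indeed that quantity can easily exceed $2$ (the diameter of the simplex), making your posterior bound vacuous. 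More fundamentally, there is no reason the regular centers $\{\eta_a\}$ should be anywhere near a set of posteriors compatible with the Markov balance equation for any weights; your sketch (``start from $z_f$, redistribute mass, correct via the null space of $I-P_f^T$'') does not control this distance. A second, related gap: your robustness step needs $\beta w_a^r (D-\rho) \gtrsim \epsilon$, so the governing quantity is $\min_a w_a^r$, not $w_{\min}=\min_{a\in A_+} w_a^*$. You never link the two.

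The paper's construction avoids both issues by working from the opposite end. It does \emph{not} build a reference with posteriors near $\eta_a$; instead it perturbs the \emph{optimal} posteriors, setting $\xi_a=(1-\delta)\mu_a+\delta\eta_a$ with weights $\hat{w}_a=(1-\rho)w_a$ directly proportional to the optimal $w_a$ (so $w_{\min}$ enters naturally). Because $\{w_a,\mu_a\}$ already satisfy balance, the discrepancy introduced by $\delta\eta_a$ is $O(\delta)$, and the paper absorbs it by enlarging the signal set to $S=A_+\cup\Omega$, adding full-revelation signals $s=\omega$ with posteriors $e_\omega$ and weights $\hat{w}_\omega$. Solving for these weights is the auxiliary LP in Lemma~\ref{lem:bound-LP}, and it is the $\ell_1$-norm of that solution, $\|y\|_1$, that is bounded by $\frac{2(1+\tau)\sqrt{|\Omega|}}{s_f}$---this factor controls the mass diverted to full revelation, not any posterior perturbation. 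Robustness is then established signal-by-signal via a Bayes-continuity bound (Lemma~\ref{lem:bayes-continuity}), giving $\|\xi_a'-\xi_a\|_1\le \frac{2\epsilon}{(1-\rho)w_{\min}}$, and the choice $\delta=\frac{2\epsilon}{w_{\min}D-2\epsilon\|y\|_1}$ makes this fall inside the $\delta D$ slack. The payoff bound is $(1-\rho)(1-\delta)\opt(\Phi_\no)$, which simplifies to the stated expression. The enlarged signal set and the role of $\|y\|_1$ as a weight-norm (rather than a posterior radius) are the two ideas your outline is missing.
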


% \kidelete{
%   \begin{theorem}\label{thm:robustness-history-independent}
% For $\ell \geq \ell_0 = 1+ \lfloor \frac{1}{1-\bar{\lambda}} \log \frac{4(1+|\Omega|)}{\underline{\pi} w_{\min} D}\rfloor$, there exists a {\em history-independent}
%   signaling mechanism $\widehat{\sigma} \in \pers(\Phi_r)$ such that
%   the sender's expected payoff under $\widehat{\sigma}$ in the
%   no-history information model is bounded below by
%   \begin{align*}
%     \left(1 - \frac{4e^{-(\ell-1)(1-\bar{\lambda})}}{\underline{\pi} w_{\min}D} \left(1 + \frac{2(1+ \tau)\sqrt{|\Omega|}}{s_f}    \right) \right) \cdot \opt(\Phi_\no),
%   \end{align*}
%   where $w_{\min}$ is the smallest positive probability of
%   recommending an action under the optimal no-history information
%   signaling mechanism, $\tau = \max_{\omega} 1/\nu_f(\omega)$ and
%   $s_f$ is the smallest positive singular value of the matrix
%   $I -P_f$.
% \end{theorem}
% }
The preceding result, together with Lemma~\ref{lem:robust-partial},  implies that for the partial-history model $\Phi_\ell$ with large enough $\ell$, the sender need not solve a non-linear program. Instead, the sender can use a simple history-independent signaling mechanism to obtain approximately
optimal payoffs. The proof involves an explicit
construction of such a signaling mechanism
$\widehat{\sigma} \in \rp(\epsilon)$. From a computational
perspective, constructing such a mechanism requires solving $\mpp(\Phi_\no)$ (equivalently the linear program $\mathsf{LP}(\no)$), 
and solving a separate linear program~\eqref{small-perturbation-LP} with $\bigO(|\Omega|)$ variables and
constraints (see Lemma~\ref{lem:bound-LP} in Appendix~\ref{ap:robustness} for details). Thus, not only
the proposed mechanism obtains approximately optimal payoffs, but it
also can be computed efficiently.

To construct the mechanism $\widehat{\sigma} \in \rp(\epsilon)$ with good guarantees on the sender's payoff, we use a similar
approach as in \citep{zu2021learning}, where we first identify a set
of beliefs that we seek to induce as the receivers' posterior beliefs
under the constructed mechanism. These beliefs are chosen to lie
strictly in the interior of the sets $\mathcal{P}_a$, to ensure that
the actions remain optimal for all close-by beliefs. However, unlike
the static setting of \citep{zu2021learning}, the endogeneity of the receivers' prior
belief in our setting raises the question of whether there exists a mechanism that
induces these beliefs as posteriors. To exhibit such a mechanism, we
prove the following analog of the splitting lemma~\citep{aumannM95, kamenica2011bayesian} for the Markovian
persuasion setting, providing conditions on a set of beliefs under which a signaling mechanism exists that induces those
beliefs as posteriors.
\begin{lemma}\label{lem:splitting-lemma-markov}
  For a finite set $S$, let $\{ \mu_s : s \in S\}$ be a set of
  beliefs, and for each $s \in S$, let $a_s \in A$ be such that
  $\mu_s \in \mathcal{P}_{a_s}$. Suppose there exists a set of weights
  $\{w_s \geq 0 : s \in S\}$ such that $\sum_{s \in S} w_s = 1$
and
\begin{align*}
  \sum_{s \in S}  \sum_{\omega} w_s \mu_s(\omega) p(\cdot | \omega, a_s) = \sum_{s \in S} w_s \mu_s.
\end{align*}
Then, there exists a signaling mechanism $\sigma \in \Sigma_0$, which sends signals
$s \in S$ with probability
$\sigma(s|\omega) = \frac{w_s \mu_s(\omega)}{\sum_{s' \in S} w_{s'}
  \mu_{s'}(\omega)}$, such that under the no-history information model
$\Phi_\no$, the posterior belief of a receiver on receiving signal $s$
equals $\mu_s$.

Conversely, for any signaling mechanism $\sigma \in \pers(\Phi_\no) \cap \Sigma_0$, 
there exists weights $w_a \geq 0$ and beliefs
$\mu_a \in \mathcal{P}_a$ with $\sum_{a \in A} w_a = 1$ and
$\sum_{\omega, a} w_a \mu_a(\omega) p(\cdot | \omega, a) = \sum_a w_a
\mu_a$, such that $\sigma(a|\omega) = \frac{w_a \mu_a(\omega) }{\sum_{a'} w_{a'} \mu_{a'}(\omega)}$.
\end{lemma}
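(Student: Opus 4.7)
The plan is to prove both directions by direct manipulation of Bayes' rule together with the invariance equation. My strategy in the forward direction is first to guess the invariant marginal of the induced chain and then verify that the proposed $\sigma$ is consistent with it.

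For the forward direction, I would define $\sigma(s|\omega) := w_s\mu_s(\omega)/\sum_{s'}w_{s'}\mu_{s'}(\omega)$ at every $\omega$ for which the denominator is positive, and set $\sigma(\cdot|\omega)$ arbitrarily elsewhere. A direct check gives $\sum_s \sigma(s|\omega) = 1$. I would then propose the candidate marginal $\pi(\omega) := \sum_s w_s\mu_s(\omega)$ and use the key identity $\pi(\omega)\sigma(s|\omega) = w_s\mu_s(\omega)$ to rewrite the hypothesis $\sum_{s,\omega} w_s\mu_s(\omega)p(\cdot|\omega,a_s) = \sum_s w_s\mu_s$ as $\sum_{\omega,s} \pi(\omega)\sigma(s|\omega)p(\omega'|\omega,a_s) = \pi(\omega')$, which is exactly the balance equation for $\pi$ under $\sigma$; by the unichain assumption (Assumption~\ref{as:unichain}), $\pi$ is the unique invariant state marginal, so $\pi = \steady(\sigma)$. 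Under $\Phi_\no$, the receiver's prior is precisely $\pi$, so Bayes' rule gives the posterior on signal $s$ as $\pi(\omega)\sigma(s|\omega)/\sum_{\omega'}\pi(\omega')\sigma(s|\omega') = w_s\mu_s(\omega)/w_s = \mu_s(\omega)$. Persuasiveness then follows from $\mu_s \in \mathcal{P}_{a_s}$.

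For the converse, given $\sigma \in \pers(\Phi_\no) \cap \Sigma_0$ with invariant marginal $\pi = \steady(\sigma)$, I would set $w_a := \sum_\omega \pi(\omega)\sigma(a|\omega)$ and $\mu_a(\omega) := \pi(\omega)\sigma(a|\omega)/w_a$ whenever $w_a > 0$ (discarding indices with $w_a = 0$). Then $w_a \geq 0$ with $\sum_a w_a = 1$, and persuasiveness under $\Phi_\no$—whose prior is again $\pi$—yields $\mu_a \in \mathcal{P}_a$. The stated formula $\sigma(a|\omega) = w_a\mu_a(\omega)/\sum_{a'} w_{a'}\mu_{a'}(\omega)$ is immediate from the definitions after noting $\sum_{a'} w_{a'}\mu_{a'}(\omega) = \pi(\omega)$, and the Markov identity $\sum_{a,\omega} w_a\mu_a(\omega)p(\cdot|\omega,a) = \sum_a w_a\mu_a$ reduces on both sides to $\pi$, which is the defining invariance of $\pi$ under $\sigma$.

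I expect the only subtleties to be bookkeeping at the boundary: states with $\pi(\omega)=0$, where $\sigma$ is under-specified in the forward construction, and actions with $w_a = 0$ in the converse. Both are harmless because every identity I need to verify is weighted by $\pi$ or by $w$, and so is unaffected by the behavior of $\sigma$ or the choice of $\mu_a$ on these null sets.
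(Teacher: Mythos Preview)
Your proposal is correct and follows essentially the same route as the paper: define the candidate invariant distribution from the weighted beliefs, verify the balance equations via the hypothesis, and read off the posteriors by Bayes' rule; the converse likewise matches, with $w_a$ and $\mu_a$ taken as the mass and normalized slice of the invariant $\pi$ at action $a$. The only cosmetic gap is that in the converse you discard actions with $w_a=0$, whereas the lemma as stated asks for $\mu_a\in\mathcal{P}_a$ for every $a\in A$; the paper handles this by picking any $\mu_a\in\mathcal{P}_a$ when $w_a=0$, which costs nothing since all the identities are weighted by $w_a$.
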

With this splitting lemma in hand, we construct our robustly
persuasive mechanism by proving the existence of weights satisfying
the preceding condition. We provide the
complete proof in Appendix~\ref{ap:robustness}.

\section{Conclusion}
We consider a Markovian persuasion setting between a single long-lived
sender and a stream of receivers, where the sender commits to a
signaling mechanism to maximize the long-run average reward. To
capture settings where the receiver may have limited historical
information, we analyze a set of endogenous information models. We
observe that the sender's persuasion problem can be posed as simple
linear programs under the full-history and the no-history information
models. However, when the receiver has partial information about the history, the sender's problem presents technical intricacies, and is computationally challenging due to its non-linear nature. To
overcome this difficulty, we adopt a robust persuasion approach to
construct a simple history-independent signaling mechanism with strong
guarantees on the payoff, that nevertheless is persuasive for all
models with sufficiently limited historical information. Furthermore, the
robust mechanism can be computed efficiently by solving simple linear
programs. From a theoretical perspective, our work highlights the
trade-off between higher sender's payoffs and being persuasive under a
larger class of information models.

% Further, we analyzed the sequence of partial-history information
% models, where the receivers observe the state-action pair $\ell$
% time periods ago for some $\ell \geq 1$. We reduced the sender's
% persuasion problem to a large-size linear program of size
% exponential in $\ell$. Given the complexity of solving the sender's
% persuasion problem in the partial-history information model, we
% leverage the robustness framework to construct a signaling mechanism
% that achieves approximately optimal payoff while being more robust
% under the partial-history information model without the need to
% solve a computationally challenging linear program.

We have focused on the setting where the sender seeks to maximize the
long-run average payoff. An alternative objective is to maximize the
cumulative discounted reward. However, note that in endogenous
information models, the receivers' belief is related to the invariant
distribution of the process, which equals long-run averages in
stationary models. Thus, the persuasion problem with discounted
rewards is similar to a constrained Markov decision process where the
objective involves discounting and the constraint requires averaging. Even in the classical context of constrained MDPs, problems with distinct discount factors in the objective and the
constraints are challenging (note that averaging can be interpreted as the limit where the discount factor converges to one). For instance, \citet{feinberg1994markov,
  feinberg1995constrained} show that in such settings the optimal
policy need not be stationary. An additional complexity that arises
with discounting rewards is the dependence on the initial conditions.
Given these challenges, a systematic analysis of Markov persuasion process with endogenous beliefs
and discounted rewards is an interesting direction for further
theoretical research.

% with distinct discount factors in the objective and the constraints.
% Even in the classical constrained Markov decision process with
% multiple discounted factors,  Given this complexity and the
% endogenous belief characteristic in our setting, we chose to analyze
% the model with maximizing the average reward.

% Overall, the study characterizes the persuasion problem under various
% endogenous information models and provides an alternative way of
% solving computationally challenging linear programs. Further, the
% result of this study has implications for further research on the
% Markov persuasion problem.

%%% Local Variables:
%%% mode: latex
%%% TeX-master: "markov-persuasion"
%%% End:

%
%
%

%
% ---- Bibliography ----
%
% BibTeX users should specify bibliography style 'splncs04'.
% References will then be sorted and formatted in the correct style.
%

\bibliographystyle{plainnat}
\bibliography{references}

\newpage
\appendix 
\section{Proofs from Section~\ref{sec:benchmark}}
\label{ap:endogenous}
\subsection{Proofs from Section~\ref{sec:analysis-benchmark}}
\label{ap:sec-benchmark}
\begin{proof}[Proof of Lemma~\ref{lem:no-full-reduction}] We prove the two statements corresponding to the no-history information model $\Phi_\no$ and the full-history information model $\Phi_\full$ separately. 

\textbf{1. No-history information model $\Phi_\no$:} We prove the statement by showing that for any $\sigma \in \Sigma_k \cap \pers(\Phi_\no)$ for some $k$, there exists a $\widehat{\sigma} \in \Sigma_0 \cap \pers(\Phi_\no)$ with the same payoff for the sender. 

First, recall that in the no-history information model $\Phi_\no$ and under the signaling mechanism $\sigma \in \Sigma_k \in \pers(\Phi_\no)$, the receiver's prior belief is given by the invariant distribution $\pi = \steady(\sigma)$. Note that $\pi$ describes the invariant distribution of the slice $\bar{h}_t^k \in \pair^k$ under $\sigma$. By abusing the notation, we let $\pi(\omega, a) \defeq  \sum_{h^{k-1}} \pi(h^{k-1}, \omega, a)$ also denote the marginal distribution of $(\bar{\omega}_t, \bar{a}_t)$ under $\pi$.

% Since $\sigma \in \pers(\Phi_\no)$, we conclude that
% \begin{align*}
%  \sum_{\omega} \pi(\omega, a) {\partial u}(\omega, a, a') \geq 0, \quad \text{ for all $a, a' \in A$.}
% \end{align*}

Now, define the signaling mechanism $\widehat{\sigma} \in \Sigma_0$ as follows: for $\omega \in \Omega$ with $\sum_a \pi(\omega, a) > 0$, let  
\begin{align*}
        \widehat{\sigma}(a|\omega) \defeq \frac{\pi(\omega,a)}{\sum_{a'}\pi(\omega, a')}\qquad \text{for $a \in A$}
\end{align*}
and for any $\omega \in \Omega$ with $\sum_{a \in A} \pi(\omega, a) = 0$, we let $\widehat{\sigma}$ recommend the receiver-optimal action at $\omega$. 

We first show that $\widehat{\pi} \in \Delta(\pair)$ with $\widehat{\pi}(\omega,a) \defeq \pi(\omega, a)$ is an invariant distribution under $\widehat{\sigma}$. To see this, observe that if $\sum_{a \in A} \pi(\omega, a) > 0$, we have 
\begin{align*}
    \sum_{x_{-1} \in \pair} \widehat{\pi}(x_{-1}) p(\omega | x_{-1}) \widehat{\sigma}(a|\omega) &= \left(\sum_{x_{-1} \in \pair} \pi(x_{-1}) p(\omega | x_{-1}) \right) \widehat{\sigma}(a|\omega)\\
    &= \left(\sum_{a'} \pi(\omega, a') \right) \widehat{\sigma}(a|\omega)\\
    &= \pi(\omega, a)\\
    &= \widehat{\pi}(\omega, a).
\end{align*}
Here, the first and fourth equality follows from the definition of $\widehat{\pi}$, the second follows from the fact 
that $\pi = \steady(\sigma)$, and the third equality follows from the definition of $\widehat{\sigma}$. Moreover, if 
$\sum_{a \in A} \pi(\omega, a) = 0$, then $\sum_{x_{-1} \in \pair} \pi(x_{-1}) p(\omega | x_{-1}) = 0$, and 
$\widehat{\pi}(\omega, a) = \pi(\omega, a) = 0$, and hence the equality continues to hold. Thus, $\widehat{\pi}$ 
satisfies the balance equations~\eqref{eq:balance} under $\widehat{\sigma}$, and thus, $\widehat{\pi} = \steady(\widehat{\sigma})$. 

Finally, in the information model $\Phi_\no$ and under the mechanism $\sigma_k$, a receiver's posterior belief that $\bar{\omega}_t= \omega$ after being recommended $\bar{s}_t = a$ is given by
\begin{align}
  F(\omega | a; \phi_t, \sigma) &= \frac{ \sum_{ h^k } \pi( h^k ) p(\omega | x_{-1}) \sigma(a | h^k , \omega) }{ \sum_{ h^k}\sum_{\omega'} \pi( h^k) p(\omega' | x_{-1}) \sigma(a |  h^k, \omega')} \label{eq:sender-opt-no-history-posterior}
\end{align}
where $\pi(h^k)$ denotes the probability that the slice $\bar{h}_t^k = h^k \in \pair^k$ under the invariant distribution $\pi = \steady(\sigma)$. Summing the  balance equation~\eqref{eq:balance} for $\pi$ over all values of $h^{k-1} \in \pair^{k-1}$, we obtain
\begin{align*}
    \sum_{h^k} \pi(h^k) p(\omega|x_{-1}) \sigma(a| h^k, \omega) = \sum_{h^{k-1}} \pi(h^{k-1}, \omega, a) = \pi(\omega, a).
\end{align*}
Substituting in \eqref{eq:sender-opt-no-history-posterior}, we obtain $F(\omega | a; \phi_t, \sigma) = \frac{\pi(\omega, a)}{\sum_{\omega'} \pi(\omega', a)} = \frac{\widehat{\pi}(\omega, a)}{\sum_{\omega'} \widehat{\pi}(\omega', a)} = F(\omega|a; \phi_t, \widehat{\sigma})$, where the last equality follows by a similar argument for $\widehat{\sigma}$ in the model $\Phi_\no$. Since
the receivers have the same belief under $\sigma$ and $\widehat{\sigma}$, and further $\sigma \in \pers(\Phi_\no)$, we conclude that $\widehat{\sigma} \in \pers(\Phi_\no)$. The result then follows from the fact that the sender's payoffs under the two mechanisms are equal.

\textbf{2. Full-history information model $\Phi_\full$:} The proof for the full-history information model follows along similar lines. Fix $\sigma \in \pers(\Phi_\full) \cap \Sigma_k$, and let $\pi = \steady(\sigma)$ denote its invariant distribution. As before, we define a mechanism $\widehat{\sigma} \in \Sigma_1$ and show that it is persuasive under $\Phi_\full$ and achieves the same payoff for the sender. Towards that end, let $\pi(x_{-1}, \omega, a)$ denote the marginal distribution under $\pi$ that $(\bar{x}_{t-1}, \bar{\omega}_t, \bar{a}_t) = (x_{-1}, \omega, a)$ and define for $x_{-1} \in \pair$ and $\omega \in \Omega$,
\begin{align*}
    \widehat{\sigma}(a|x_{-1}, \omega) \defeq \frac{\pi(x_{-1}, \omega, a)}{\sum_{a'} \pi(x_{-1}, \omega, a')},
\end{align*}
if the denominator is positive, and otherwise let $\widehat{\sigma}$ recommend the receiver-optimal action at $\omega$. Similarly, define $\widehat{\pi} \in \Delta(\pair)$ to be $\widehat{\pi}(\omega, a) \defeq \sum_{x_{-1}} \pi(x_{-1}, \omega, a)$. Whenever $\sum_{a'} \pi(x_{-1}, \omega,a) > 0$, we have
\begin{align*}
      \sum_{x_{-1} \in \pair} \widehat{\pi}(x_{-1}) p(\omega | x_{-1}) \widehat{\sigma}(a|x_{-1},\omega) &= \sum_{x_{-1} \in \pair} \left(\sum_{x_{-2} \in \pair} \pi(x_{-2}, x_{-1}) p(\omega|x_{-1}) \right) \widehat{\sigma}(a|x_{-1},\omega)\\
      &= \sum_{x_{-1} \in \pair} \left(\sum_{a'} \pi(x_{-1}, \omega, a')\right) \widehat{\sigma}(a|x_{-1},\omega)\\
      &= \sum_{x_{-1} \in \pair} \pi(x_{-1}, \omega, a)\\
      &= \widehat{\pi}(\omega, a).
\end{align*}
On the other hand, if $\sum_{a'} \pi(x_{-1}, \omega,a) = 0$, we obtain both sides of the equations are zero. Thus, we conclude that $\widehat{\pi} \in \steady(\widehat{\sigma})$.

Finally, in the information model $\Phi_\full$ and under the signaling mechanism $\sigma$, we have
\begin{align*}
    F(\omega| a; \phi_t, \widehat{\sigma}) = \frac{ p(\omega|x_{-1}) \sigma(a| h^k, \omega)}{\sum_{\omega'} p(\omega'|x_{-1}) \sigma(a| h^k, \omega')}.
\end{align*}
As $\sigma \in \pers(\Phi_\full)$, we obtain for any $a , a' \in A$ and all $h^k \in \pair^k$,
\begin{align*}
    \sum_{\omega} p(\omega|x_{-1}) \sigma(a| h^k, \omega) {\partial u}(\omega, a,a') &\geq 0.
\end{align*}
After multiplying by $\pi(h^k)$, summing up over $ (x_{-k}, \dots, x_{-2})$, and using the fact 
that $ \pi(x_{-1}, \omega, a) = \sum_{(x_{-k}, \dots, x_{-2})} \pi(h^k) p(\omega| x_{-1}) \sigma(a|h^k, \omega)$ from
the balance equations for $\pi = \steady(\sigma)$, we obtain for all  $x_{-1} \in \pair$ and $a, a' \in A$, 
\begin{align*}
    \sum_{\omega} \pi(x_{-1}, \omega, a) {\partial u}(\omega, a,a') &\geq 0.
\end{align*}
Now, from the definition of $\widehat{\sigma}$, we have $\pi(x_{-1}, \omega, a) = \widehat{\pi}(x_{-1}) p(\omega|x_{-1})\widehat{\sigma}(a|x_{-1}, a)$ if $\sum_{a'} \pi(x_{-1}, \omega, a') > 0$. Furthermore, under this condition and using the fact that $\pi \in \steady(\sigma)$, we have $\widehat{\pi}(x_{-1}) = \sum_{x_{-2}} \pi(x_{-2}, x_{-1}) = \sum_{x_{-2}} \pi(x_{-1}, x_{-2}) > 0$. Thus, we conclude that 
\begin{align*}
    p(\omega|x_{-1})\widehat{\sigma}(a| x_{-1}, \omega) {\partial u}(\omega, a,a') &\geq 0.
\end{align*}
On the other hand, if $\sum_{a'} \pi(x_{-1}, \omega, a') > 0$, then $\widehat{\sigma}$ recommends the receiver-optimal action. Thus, we conclude that $\widehat{\sigma} \in \pers(\Phi_\full)$. Once again, the result then follows as the sender's payoffs under the two mechanisms are equal.\qed

\end{proof}

\begin{proof}[Proof of Proposition~\ref{prop:lp-formulations}] We prove the statement for $\mpp(\Phi_\full)$.  A similar argument, with minor modifications, obtains the equivalence of $\mpp(\Phi_\no)$ and $\lp(\no)$; we omit it for brevity. 

From Lemma~\ref{lem:opt}, we know that there exists an optimal signaling mechanism for $\mpp(\Phi_\full)$ within the set $\Sigma_1$. The proof shows that for any $\sigma \in \pers(\Phi_\full) \cap \Sigma_1$, there exists a corresponding feasible solution $z$ to $\lp(\full)$ whose objective equals the sender's payoff, and conversely, for any feasible solution $z$ to $\lp(\full)$, there exists a signaling mechanism $\sigma \in \pers(\Phi_\full)$ with sender's payoff equaling the sender's payoff at $z$. 

To begin, fix $\sigma \in \pers(\Phi_\full) \cap \Sigma_1$, and let $\pi = \steady(\sigma) \in \Delta(\pair)$. The balance equations~\eqref{eq:balance} are given by
\begin{align*}
    \sum_{x_{-1}} \pi(x_{-1})p(\omega|x_{-1})\sigma(a| x_{-1}, \omega) = \pi(\omega,a), \quad \text{for all $(\omega, a) \in \pair$.}
\end{align*}
Define $z(w, a) \defeq \pi(x_{-1}) p(\omega|x_{-1})\sigma(a|x_{-1}, \omega)$ for $w=(x_{-1}, \omega) \in \mathcal{W}_\full$ and $a \in A$. It is straightforward to check that $z$ satisfies the second equality in $\lp(\full)$. For $w = (\omega_{-1}, a_{-1}, \omega) \in \mathcal{W}_\full$, we obtain
\begin{align*}
\sum_{a} z(w, a) &= \pi(x_{-1}) p(\omega|x_{-1})\\
&= \sum_{x_{-2}} \pi(x_{-2}) p(\omega_{-1}| x_{-2}) \sigma(a_{-1} | x_{-2}, \omega_{-1}) p(\omega|x_{-1})\\
&=  \sum_{ x_{-2}} z( x_{-2},\omega_{-1} , a_{-1}) p(\omega|x_{-1}) \\
&=  \sum_{ x_{-2}} \sum_{\hat{w}, \hat{a}} z( \hat{w} , \hat{a}) p(\omega|x_{-1}) \ind\{ \hat{w} = ( x_{-2},\omega_{-1}), \hat{a} = a_{-1} \}\\
&=  \sum_{\hat{w}, \hat{a}} z( \hat{w} , \hat{a})  \sum_{ x_{-2}}   p(\omega|x_{-1})  \ind\{ \hat{w} = ( x_{-2},\omega_{-1}),\hat{a} = a_{-1} \}\\
&= \sum_{\hat{w}, \hat{a}} z( \hat{w}, \hat{a}) p(w| \hat{w}, \hat{a}).
\end{align*}
Here, the second equality follows from~\eqref{eq:balance}, the second follows from the definition of $z$, and the final equality follows from the definition of $p(w|\hat{w}, \hat{a})$. Thus, we conclude that $z$ satisfies both the equalities in $\lp(\full)$. Finally, since $\sigma \in \pers(\Phi_\full) \cap \Sigma_1$, we obtain for all $x_{1} \in \pair$ and $a ,  a' \in A$, 
\begin{align*}
    \sum_{\omega} p(\omega| x_{-1}) \sigma( a | x_{-1}, \omega) {\partial u} (\omega, a, a') \geq 0.
\end{align*}
Thus, we obtain for all $x_{-1} \in \pair$ and for all $a, a' \in A$, 
\begin{align*}
    \sum_{w \in \mathcal{W}_\full} z(w, a)  D(x_{-1}, w) {\partial u} (w, a, a')    
    &= \sum_{x \in \pair} \sum_{\omega \in \Omega} z(x,\omega, a) \ind\{x = x_{-1}\}{\partial u} (\omega, a, a') \\
    &= \sum_{\omega} z(x_{-1}, \omega, a)  {\partial u} (\omega, a, a') \\
    &= \pi(x_{-1}) \left(\sum_{\omega} p(\omega| x_{-1}) \sigma( a | x_{-1}, \omega) {\partial u} (\omega, a, a')\right) \geq 0.
\end{align*}
Thus, we obtain that $z$ satisfies the inequality in $\lp(\full)$. Finally, since $\sum_{x \in \pair} z(x, \omega,a) = \pi(\omega, a)$, we conclude that the sender's payoff under $\sigma$ equals the $\lp(\full)$ objective at $z$. This concludes the first part of the statement. 

Conversely, suppose $z$ is a feasible solution for $\lp(\full)$. Define the signaling mechanism $\sigma \in \Sigma_1$ as follows: for all $w = (x_{-1}, \omega) \in \mathcal{W}_\full$ with $\sum_{a'} z(w,a') > 0$, let 
\begin{align*}
    \sigma(a|x_{-1}, \omega) \defeq \frac{z(w, a)}{\sum_{a'} z(w,a')}.
\end{align*}
For $w = (x_{-1}, \omega) \in \mathcal{W}_\full$ with  $\sum_{a'} z(w,a') = 0$, let $\sigma$ recommend the receiver-optimal action at $\omega$. We note that $\pi \in \Delta(\pair)$ defined as $\pi(\omega, a) \defeq \sum_{x \in \pair} z(x, \omega , a)$ for $(\omega, a) \in \pair$ is invariant under $\sigma$. To see this, observe
\begin{align*}
    \sum_{x_{-1}} \pi(x_{-1}) p(\omega| x_{-1}) \sigma(a| x_{-1}, \omega) 
    &=  \sum_{x_{-1}} \left( \sum_{x_{-2}}z(x_{-2}, x_{-1}) \right) p(\omega| x_{-1}) \sigma(a| x_{-1}, \omega) \\
    &=  \sum_{x_{-1}} \left(\sum_{a'} z(x_{-1}, \omega, a')\right) \sigma(a| x_{-1}, \omega) \\
    &= \sum_{x_{-1}} z(x_{-1}, \omega, a)\\
    &= \pi(\omega, a)
\end{align*}
where the first and the fourth equality follows from the definition of $\pi$. The second equality follows from the first equality constraint of $\lp(\full)$ (and from the feasibility of $z$), and the third equality follows from the definition of $\sigma$. Thus, $\pi = \steady(\sigma)$. Finally, for any $x_{-1} \in \pair$ and all $a, a' \in A$, we have
\begin{align*}
    \sum_{w \in \mathcal{W}_\full} z(w, a)  D(x_{-1}, w) {\partial u} (w, a, a')    
    &=  \sum_{\omega \in \Omega} z(x_{-1},\omega, a) {\partial u} (\omega, a, a') \\
    &= \sum_{\omega} \left( \sum_{a'} z(x_{-1}, \omega, a' )\right) \sigma(a | x_{-1}, \omega)  {\partial u} (\omega, a, a') \\
    &= \sum_{\omega} \left( \sum_{x_{-2}} z(x_{-2}, x_{-1}) p( \omega| x_{-1}) \right) \sigma(a | x_{-1}, \omega)  {\partial u} (\omega, a, a') \\
    &= \pi(x_{-1}) \cdot \sum_{\omega} p(\omega| x_{-1}) \sigma(a|x_{-1}, \omega) {\partial u}(\omega, a, a').
\end{align*}
Here, the second equality follows from the definition of $\sigma$, the third equality follows from the feasibility of $z$ to $\lp(\full)$, and in the final equality, we have used the definition of $\pi$. Thus, for all $x_{-1} \in \pair$ with $\pi(x_{-1}) > 0$, from the feasibility of $z$, we obtain  
\begin{align*}
    \sum_{\omega} p(\omega|x_{-1}) \sigma(a| x_{-1}, \omega) {\partial u}(\omega, a, a')  \geq 0,
\end{align*}
and hence a receiver, after observing $\bar{x}_{t-1} = x_{-1}$, would find it optimal to adopt action $a$ if recommended by $\sigma$. Finally, from the fact that $\sum_{x} z(x, \omega, a) = \sum_{x} z(\omega, a, x)$, we obtain that if $\pi(x_{-1}) = 0$, then $\sum_{a'} z(x_{-1}, \omega, a') = 0$, and hence, $\sigma$ recommends the receiver-optimal action at each $\omega$. Thus, again, the receiver finds it optimal to follow the recommendation. Taken together, we conclude that $\sigma \in \pers(\Phi_\full)$. The converse follows from noticing that the sender's payoff under $\sigma$ equals the $\lp(\full)$ objective at $z$. 

Summarizing the two parts, we obtain that the sender's problem $\mpp(\Phi_\full)$ can be equivalently formulated as the $\lp(\full)$.\qed
\end{proof}

\subsection{Proofs from Section~\ref{sec:analysis-partial-info}}
\label{ap:sec-partial-info}

\begin{proof}[Proof of Lemma~\ref{lem:opt}]
  We first show that for any $\ell \geq 1$, $\pers(\Phi_\full) \subseteq \pers(\Phi_\ell)$. To see this, let $\sigma \in \Sigma_k \cap \pers(\Phi_\full)$. Define $\widehat{\sigma}$ to be the
  signaling mechanism that, at each time $t$, in addition to
  recommending an action according to $\sigma$ also truthfully
  reveals $\bar{h}_t^\ell$. Since the information of the receiver under $\widehat{\sigma}$ in the model $\Phi_\ell$ is same as that under $\sigma$ in the model $\Phi_\full$, we conclude that it is optimal for the receiver to follow the recommended action. Since this is true no matter the realization of the slice $\bar{h}_t^\ell$, the receiver should find it optimal to follow the recommendation even without being informed about the realization. In other words, the receiver should find it optimal to follow the recommendations of $\sigma$ in the model $\Phi_\ell$, and hence  $\sigma \in \pers(\Phi_\ell)$. Thus, we conclude $\pers(\Phi_\full) \subseteq \pers(\Phi_\ell)$ for $\ell \geq 1$. A similar argument yields $\pers(\Phi_\ell) \subseteq \pers(\Phi_{\ell +1}) \subseteq \pers(\Phi_\no)$. 
\qed
\end{proof}

\begin{proof}[Proof of Proposition~\ref{prop:full-equals-no}]
From Lemma~\ref{lem:opt}, we have $\opt(\Phi_\full) \leq  \opt(\Phi_\no)$; thus, it remains to show that under the conditions of the lemma, $\opt(\Phi_{\full}) \geq \opt(\Phi_{\no})$. To show this inequality, we construct a signaling mechanism $\widehat{\sigma} \in \pers(\Phi_\full) \cap \Sigma_1$ that achieves the same payoff as the optimal signaling mechanism $\sigma \in \pers(\Phi_\no) \cap \Sigma_0$. 

To begin, note that since $p(\cdot|x) \in \conv(\mathcal{B}_\no)$ for each $x \in \pair$, there exists a set of non-negative weights $\{\lambda(a|x) : a \in A, x \in \pair\}$ such that 
\begin{align*}
    p(\omega|x) &= \sum_{a'}\lambda(a'|x) \mu_{a'}(\omega), \quad \text{for all $x \in \pair, \omega \in \Omega$}\\
    \sum_{a\in A}\lambda(a|x) &= 1, \quad \text{for all $x \in \pair$}.
\end{align*}
Define the mechanism $\widehat{\sigma} \in \Sigma_1$ as follows: for each $x_{-1} \in \pair$, let
 \begin{align*}
    \widehat{\sigma}(a|x_{-1}, \omega) \defeq \frac{\lambda(a|x_{-1})\mu_a(\omega)}{\sum_{a'} \lambda(a'|x_{-1})\mu_{a'}(\omega)} = \frac{\lambda(a|x_{-1})\mu_a(\omega)}{p(\omega|x_{-1})},
\end{align*}
whenever the denominator is positive, and otherwise let $\widehat{\sigma}$ recommend the receiver-optimal action at $\omega$.

We first show that $\widehat{\sigma} \in \pers(\Phi_\full)$. For each $x_{-1} \in \pair$,  we have 
\begin{align*}
    &\sum_{\omega} p(\omega|x_{-1})\widehat{\sigma}(a|x_{-1}, \omega) \partial u(\omega, a, a')\\
&\quad = \sum_{\omega} \left(\sum_{a' \in A} \lambda(a'|x_{-1}) \mu_{a'}(\omega)\right)\frac{\lambda(a|x_{-1})\mu_a(\omega)}{\sum_{a' \in A} \lambda(a'|x_{-1})\mu_{a'}(\omega)}\partial u(\omega, a, a') \\
&\quad = \lambda(a|x_{-1}) \left(\sum_{\omega}\mu_a(\omega)\partial u(\omega, a, a')\right).
\end{align*}
Since $\mu_a(\omega)$ is the posterior belief induced by $\sigma \in \pers(\Phi_\no)$, we have $\sum_{\omega}\mu_a(\omega)\partial u(\omega, a, a') \geq 0$. As $\lambda(a|x_{-1}) \geq 0$, we have for all $x_{-1} \in \pair$,
\begin{align*}
    \sum_{\omega} p(\omega|x_{-1})\widehat{\sigma}(a|x_{-1}, \omega) \partial u(\omega, a, a') \geq 0,
\end{align*}
If $\sum_{\omega'}p(\omega'|x_{-1}) \widehat{\sigma}(a| x_{-1}, \omega') > 0$, then upon dividing by it, we obtain action $a$ is optimal for the receiver if it is recommended by $\widehat{\sigma}$ in the model $\Phi_\full$. On the other hand, if $\sum_{\omega'}p(\omega'|x_{-1}) \widehat{\sigma}(a| x_{-1}, \omega') = 0$, then $\widehat{\sigma}$ recommends the receiver-optimal action. Thus, we conclude that it is always optimal for the receiver to follow the recommendations by $\widehat{\sigma}$ in the model $\Phi_\full$, and hence $\widehat{\sigma} \in \pers(\Phi_\full)$.

We now show that $\sigma$ and $\widehat{\sigma}$ induce the same marginal distribution over $\pair$. Let $\pi =\steady(\sigma)$. Let $\tau_a \defeq \sum_{\omega} \pi(\omega, a)$, and note that $\pi(\omega, a) = \tau_a \mu_a(\omega)$. Using the definition of $\widehat{\sigma}$, we have for $\omega \in \Omega$ and $a \in A$,
\begin{align}
    \sum_{x_{-1} \in \pair}\pi(x_{-1}) p(\omega|x_{-1}) \widehat{\sigma}(a|x_{-1}) &=\sum_{x_{-1} \in \pair}\pi(x_{-1}) \lambda(a|x_{-1})\mu_a(\omega)\notag \\
    &= \sum_{x_{-1} \in \pair}\tau_{a_{-1}} \mu_{a_{-1}}(\omega_{-1}) \lambda(a|x_{-1})\mu_a(\omega),\label{eq:tau-balance}
  \end{align}
  where we have used the fact that $\pi(x_{-1}) = \tau_{a_{-1}} \mu_{a_{-1}}(\omega_{-1})$. Summing both sides over $a$, we obtain 
  \begin{align*}
      \sum_{x_{-1}\in \pair}\pi(x_{-1}) p(\omega|x_{-1})  &= \sum_{x_{-1} \in \pair}\tau_{a_{-1}} \mu_{a_{-1}}(\omega_{-1}) \left(\sum_{a} \lambda(a|x_{-1})\mu_a(\omega)\right).
  %    &= \sum_{x_{-1} \in \pair}\tau_{a_{-1}} \mu_{a_{-1}}(\omega_{-1}) p(\omega|x_{-1})
  \end{align*}
Moreover, from the balance equation~\eqref{eq:balance}, we have 
\begin{align*}
    \sum_{x_{-1}\in \pair}\pi(x_{-1}) p(\omega|x_{-1})  &= \sum_{a}\pi(\omega, a) =\sum_{a}\tau_a \mu_a(\omega). 
\end{align*}
Equating the right-hand sides of the two preceding equations, we obtain
\begin{align*}
     \sum_{a} \left(\tau_a - \sum_{x_{-1} \in \pair} \tau_{a_{-1}} \mu_{a_{-1}}(\omega_{-1}) \lambda(a|x_{-1}) \right) \mu_a(\omega) = 0
\end{align*}
Because $\{\mu_a\}$ are linearly independent, we have for all $a \in A$, 
\begin{align*}
    \tau_a = \sum_{x_{-1} \in \pair} \tau_{a_{-1}} \mu_{a_{-1}}(\omega_{-1}) \lambda(a|x_{-1}).
\end{align*}
Substituting back in \eqref{eq:tau-balance}, we obtain
\begin{align}
    \sum_{x_{-1} \in \pair}\pi(x_{-1}) p(\omega|x_{-1}) \widehat{\sigma}(a|x_{-1}) &= \tau_a \mu_a(\omega) = \pi(\omega, a).
  \end{align}
Thus, $\pi$ is also an invariant distribution under $\widehat{\sigma}$, and thus, the two mechanisms induce the same marginal distribution over $\Delta(\pair)$. \qed

\end{proof}

\section{Proofs from the Section~\ref{sec:robustness}}\label{ap:robustness}
In this section, we provide the missing proofs from
Section~\ref{sec:robustness}. Throughout, we use the same notation as
in that section.

\begin{proof}[Proof of Proposition~\ref{prop:partial-bilinear}]
The proof of the proposition is similar to that of Proposition~\ref{prop:lp-formulations}, and we only highlight the parts that are different. 

First, consider a signaling mechanism $\sigma \in \Sigma_1 \cap \pers(\Phi_1)$, and let $\pi = \steady(\sigma) \in \Delta(\pair^2)$ denote the invariant distribution under $\sigma$. Define $z$ as follows:
\begin{align*}
    z(x_{-2}, x_{-1}, \omega, a) \defeq \pi(x_{-2}, x_{-1}) p(\omega| x_{-1}) \sigma(a| x_{-1}, \omega).
\end{align*}
Then, similar arguments to Proposition~\ref{prop:lp-formulations} shows that $z$ satisfies the linear equalities and the inequality in \eqref{opt:non-linear}. Similarly, the value of the objective is readily seen to equal the sender's payoff under $\sigma$. Finally, the non-linear equality holds because, for any $x_{-2}', x_{-2}, x_{-1}, (\omega, a) \in \pair$, we have
\begin{align*}
       z(x_{-2}, x_{-1}, \omega, a)\cdot \sum_{a' \in A} z(x'_{-2}, x_{-1}, \omega, a') &= \pi(x_{-2}, x_{-1})p(\omega|x_{-1})\sigma(a|x_{-1}, \omega)\pi(x'_{-2}, x_{-1}) p(\omega|x_{-1})\\
       & = z(x'_{-2}, x_{-1}, \omega, a) \cdot \sum_{a' \in A} z(x_{-2}, x_{-1}, \omega, a').
\end{align*}

Conversely, let $z$ be any feasible solution to~\eqref{opt:non-linear}. We define $\sigma \in \Sigma_1$ as follows: For any $x_{-1} \in \pair$ and $\omega \in \Omega$, if there exists an $x_2 \in \pair$ such that $\sum_{a' \in A} z(x_{-2}, x_{-1}, \omega, a') > 0$, let 
\begin{align*}
\sigma(a| x_{-1}, \omega) = \frac{ z(x_{-2}, x_{-1}, \omega, a)}{\sum_{a' \in A} z(x_{-2}, x_{-1}, \omega, a')}.
\end{align*}   
Note that the non-linear constraint on $z$ implies that the right-hand side does not depend on $x_{-2}$, and thus $\sigma$ is well-defined. On the other hand, if $\sum_{a' \in A} z(x_{-2}, x_{-1}, \omega, a') = 0$ for all $x_{-2} \in \pair$, let $\sigma(\cdot| x_{-1}, \omega)$ recommend the receiver-optimal action at $\omega$. Furthermore, define $\pi \in \Delta(\pair^2)$ as $\pi(x_{-1}, \omega, a) \defeq \sum_{x_{-2}} z(x_{-2}, x_{-1}, \omega, a)$. Using similar arguments in Proposition~\ref{prop:lp-formulations}, it follows that $\pi =\steady(\sigma)$ and furthermore that $\sigma \in \pers(\Phi_1)$. The final step is to see that the objective of the non-linear program at $z$ equals the sender's payoff under $\sigma$. \qed
% To see $\pi = \steady(\sigma)$, note that
% \begin{align*}
%     \sum_{x_{-2} \in \pair} \pi(x_{-2}, x_{-1}) p(\omega | x_{-1}) \sigma(a| x_{-1}, \omega) 
%     &= \sum_{x_{-2} \in \pair} \left( \sum_{x_{-3}} z(x_{-3}, x_{-2}, x_{-1}) \right) p(\omega | x_{-1}) \sigma(a| x_{-1}, \omega) \\
%     &= \sum_{x_{-2} \in \pair}  z(x_{-2}, x_{-1}, \omega, a)  \\
%     &= \pi(x_{-1}, \omega, a)
% \end{align*}
% To see the persuasiveness of $\sigma$ under $\Phi_1$, we have
% \begin{align*}
%     \sum_{x_{-1} \in \pair} \sum_{\omega} z(x_{-2}, x_{-1}, \omega, a) {\partial u}(\omega, a, a') 
%     &= \sum_{x_{-1} \in \pair} \sum_{\omega} \left(\sum_{a' \in A}  z(x_{-2}, x_{-1}, \omega, a')\right) \sigma(a| x_{-1}, \omega) {\partial u}(\omega, a, a') \\
%     &= \sum_{x_{-1} \in \pair} \sum_{\omega} \left(\sum_{x_{-3} \in \pair}  z(x_{-3}, x_{-2}, x_{-1}) \right) p(\omega|x_{-1}) \sigma(a| x_{-1}, \omega) {\partial u}(\omega, a, a') \\
%     &= \sum_{x_{-1} \in \pair} \sum_{\omega} \pi(x_{-2}, x_{-1}) p(\omega|x_{-1}) \sigma(a| x_{-1}, \omega) {\partial u}(\omega, a, a') \\
% \end{align*}
% Thus, persuasiveness holds for all $x_{-2}$ with $\pi(x_{-2}) > 0$. If $\pi(x_{-2}) = 0$, since 
% \begin{align*}
%     \pi(x_{-2}) &= \sum_{x_{-3}} \sum_{x_{-1}} z(x_{-3}, x_{-2}, x_{-1})\\
%     &= \sum_{x_{-3}} \sum_{x_{-1}} z( x_{-2}, x_{-1}, x_{-3})\\
% \end{align*}
% we obtain that $\sum_{a'} z(x_{-2}, x_{-1}, \omega, a') = 0$, and hence $\sigma$ recommends the receiver-optimal action. Thus, $\sigma$ is persuasive in this case as well.
\end{proof}

\begin{proof}[Proof of Lemma~\ref{lem:robust-partial}] Let $\sigma \in \rp(\epsilon)$ for some fixed $\epsilon > 0$, and let $\ell \geq 0$ be such that $d_\ell(\sigma) \leq \epsilon$. Consider the information model $\Phi_\ell$, and assume the receivers follow the action recommendations.  From the perspective of a receiver at time $t$, the relevant information about the history $\bar{h}_{t-\ell}$  is the value $\bar{x}_{t-\ell-1}$, as earlier state-action pairs do not affect the subsequent transitions. If $\bar{x}_{t-\ell-1} = x \in \pair$, the distribution of $\bar{\omega}_t$ (and hence the receiver's belief) is given by $Q^\sigma(x, \omega)$. Thus, the receiver's belief lies within $d_\ell(\sigma)$ of the invariant distribution $\pi = \steady(\sigma)$. Since $\sigma \in \rp(\epsilon)$ and $d_\ell(\epsilon) \leq \epsilon$, we obtain that it is optimal for this receiver to follow the recommendation made by $\sigma$. Thus, we obtain $\sigma \in \pers(\Phi_\ell)$.  

To prove the bound in the lemma statement, we note that since $\sigma \in \rp(\epsilon) \subseteq \Sigma_0$, it corresponds to a stationary Markov policy, and hence the induced Markov chain over the states is ergodic by Assumption~\ref{as:unichain}. The result is then obtained using the following bound on the mixing time of this chain~\citep[Theorem~12.4]{levin2017markov}: 
\begin{align*}
     \ell \geq  \frac{1}{\gamma_\star}\log\left( \frac{2}{\epsilon \pi_{\min}(\sigma)}\right)  \implies d_\ell(\sigma) \leq \epsilon,
\end{align*}
where $\gamma_\star(\sigma)$ is the absolute spectral gap of the underlying Markov chain (i.e., the smallest value of $1 - |\lambda|$ over all non-unit eigenvalues $\lambda$ of the transition kernel matrix under $\sigma$), and $\pi_{\min}(\sigma) = \min_{\omega} \pi(\omega)$. Note that $\pi(\omega) > 0$ for all $\omega \in \Omega$ from Assumption~\ref{as:unichain}, and hence $\pi_{\min}(\sigma)$ is well defined.\qed\end{proof}

\begin{proof}[Proof of Lemma~\ref{lem:splitting-lemma-markov}] We
  begin by proving the first part of the lemma statement. Given a set
  $S$, beliefs $\{ \mu_s : s \in S\}$ and the weights $w_s \geq 0$ as
  in the lemma statement, define the distribution
  $\pi \in \Delta(\Omega \times A)$ as
  $\pi(\omega, a) = \sum_{s \in S} w_s \mu_s(\omega) \ind\{a_s = a\}$.
  We claim that $\pi$ is the steady state distribution under $\sigma$, when each receiver chooses the action $a_s$ after receiving signal $s \in S$. (We show below that this is indeed optimal for the receiver in the information model $\Phi_\no$.) This follows from
  noticing that $\pi$ satisfies the balance equations, as
  we show next. For each $\omega \in \Omega$, we have
  \begin{align*}
    \sum_{\omega', a'} \pi(\omega', a') p(\omega | \omega', a') 
    &= \sum_{\omega', a'}  \left(\sum_{s \in S} w_{s}\mu_{s}(\omega')\ind\{a_{s} =a'\}\right) p(\omega | \omega', a')\\
    &= \sum_{\omega'}  \sum_{s \in S} w_{s}\mu_{s}(\omega') \left(\sum_{a'} \ind\{a_{s} =a'\}p(\omega | \omega', a')\right)\\
    &= \sum_{\omega'}  \sum_{s \in S} w_{s}\mu_{s}(\omega') p(\omega | \omega', a_{s})\\
    &= \sum_{s \in S} w_{s} \mu_{s}(\omega).
  \end{align*}
  Here, the final equality follows from the assumption made on the weights
  $\{w_s\}$ in the lemma statement. This, in turn implies that
  \begin{align*}
    \sum_{\omega', a'} \pi(\omega', a') p(\omega | \omega', a') \sum_{s' \in S} \sigma(s'|\omega) \ind\{a_{s'} = a\} &= \sum_{s \in S} w_{s} \mu_{s}(\omega) \sum_{s' \in S} \sigma(s'|\omega) \ind\{a_{s'} = a\}\\
                                                                                                                     &= \sum_{s' \in S} \ind\{a_{s'} = a\} \sigma(s'|\omega) \left(\sum_{s \in S} w_{s} \mu_{s}(\omega)  \right) \\
                                                                                                                     &= \sum_{s' \in S} \ind\{a_{s'} = a\} w_{s'} \mu_{s'}(\omega) \\
    &= \pi(\omega, a),
  \end{align*}
  where the penultimate inequality follows from the definition of
  $\sigma(s|\omega)$. Thus, we conclude that $\pi$ satisfies the
  balance equations.

  Since $\pi$ is the invariant distribution under $\sigma$, the marginal distribution of the state (and hence the receivers' prior belief in the model $\Phi_\no$) equals $\pi(\omega) = \sum_{a \in A} \pi(\omega, a) = \sum_{s \in S} w_s \mu_s(\omega)$. From the definition of $\sigma$, we have
  $\pi(\omega)\sigma(s|\omega) = w_s \mu_s(\omega)$. Thus, in the information model $\Phi_\no$, the
  posterior belief of a receiver that the state is $\omega$ upon receiving a signal
  $s \in S$ is given by Bayes' rule as
  \begin{align*}
    \frac{\pi(\omega) \sigma(s|\omega)}{\sum_{\omega'} \pi(\omega') \sigma(s|\omega')} &= \mu_{s}(\omega).
  \end{align*}
  Since $\mu_s \in \mathcal{P}_{a_s}$, we conclude that choosing action $a_s$ after receiving the signal $s$ is indeed optimal for the receiver. This concludes the proof of the first part of the lemma statement.

  To show the converse, let $\sigma \in \pers(\Phi_\no) \cap \Sigma_0$ and let $\pi = \steady(\sigma)$. For any $a \in A$ with
  $\sum_{\omega'}\pi(\omega', a) >0$, define
  $w_a \defeq \sum_{\omega'}\pi(\omega', a)$, and
  $\mu_a \defeq \frac{\pi(\cdot, a)}{w_a}$. For any $a \in A$ with
  $\sum_{\omega'}\pi(\omega', a) = 0$, define $w_a = 0$ and $\mu_a$ be
  any belief in $\mathcal{P}_a$.
By construction, we have $\sum_{a\in A} w_a = 1$ and $\pi(\omega, a) = w_a \mu_a(\omega)$ for all $(\omega, a) \in \Omega \times A$. As $\sigma \in \pers(\Phi_\no) \cap \Sigma_0$, we have $\sum_{\omega} \pi(\omega, a) {\partial u}(\omega, a, a') \geq 0$ for all $a, a' \in A$. Consequently, we obtain $w_a \sum_{\omega} \mu_a(\omega){\partial u}(\omega, a, a') \geq 0$, implying that $\mu_a \in \mathcal{P}_a$ also holds for any $a$ with $w_a  > 0$. Finally, for all $\omega \in \Omega$, we obtain
\begin{align*}
 \sum_a w_a \mu_a(\omega) &= \sum_a \pi(\omega, a) \\
 &= \sum_a \sum_{\omega', a'} \pi(\omega', a')  p(\omega |\omega', a') \sigma(a |  \omega)\\
 &= \sum_{\omega', a'} \pi(\omega', a')  p(\omega |\omega', a')\\
  &=\sum_{\omega', a'} w_{a'} \mu_{a'} (\omega') p(\omega |\omega', a').
\end{align*}
Here, the second equality follows because $\pi$ is the steady-state distribution induced by $\sigma$. Thus, we conclude that $\{w_a , \mu_a\}_a$ satisfies all the conditions in the lemma statement.\qed
\end{proof}

\begin{proof}[Proof of
  Theorem~\ref{thm:robustness-history-independent}]
  To begin, let $\sigma \in \pers(\Phi_\no)$ denote the optimal
  signaling mechanism in the no-history information model $\Phi_\no$.
  Let $\pi = \steady(\sigma)$ denote the invariant distribution
  under $\sigma$, and let $\pi(\omega) = \sum_a \pi(\omega, a)$ denote
  the marginal over the states. From
  Lemma~\ref{lem:splitting-lemma-markov} we know there exist weights
  $w_a \geq 0$, with $\sum_{a \in A} w_a = 1$, and beliefs
  $\mu_a \in \mathcal{P}_a$ for $a \in A$, satisfying
  $\pi(\omega, a) = w_a \mu_a(\omega)$ and
  \begin{align}\label{eq:splitting-converse-condition}
    \sum_{\omega, a} w_a \mu_a(\omega) p(\cdot | \omega,a) = \sum_{a} w_a \mu_a.
  \end{align}

  Let $A_+ = \{ a \in A: \sum_{\omega \in \Omega} \pi(\omega) \sigma(a
  |\omega) > 0\}$ denote the set of actions that are recommended with
  positive probability under $\sigma$. It is straightforward to show
  that $A_+ = \{a : w_a > 0\}$.

  \textbf{Construction of a signaling mechanism:} We begin by
  constructing a signaling mechanism $\widehat{\sigma}$ and show it to
  be persuasive in the no-information model $\Phi_\no$; subsequently, we prove
  the stronger claim of $\epsilon$-robust persuasiveness. First, using
  Assumption~\ref{as:regularity}, for any $a \in A_+$, let
  $\eta_a \in \mathcal{P}_a$ be such that
  $\mathsf{B}_1(\eta_a, D) \subseteq \mathcal{P}_a$. For some small
  $\delta \in [0,1]$, whose exact value we will set later to obtain
  robustness, define $\xi_a = (1-\delta)\mu_a + \delta \eta_a$ for all
  $a \in A_+$. Since $\mu_a, \eta_a \in \mathcal{P}_a$ and the latter
  set is convex, we obtain that
  $\mathsf{B}_1(\xi_a, \delta D) \subseteq \mathcal{P}_a$. Next, let
  $e_\omega$ denote the belief that puts all its weight on the state
  $\omega \in \Omega$.

  We seek to construct a signaling mechanism $\widehat{\sigma}$ which
  sends signals in the set $S = A_+ \cup \Omega$, such that in the
  model $\Phi_\no$, the posterior belief upon receiving a signal
  $s = a \in A_+$ is $\xi_a$, whereas upon receiving a signal
  $s = \omega \in \Omega$, the posterior belief is $e_\omega$. Let
  $a_s = a$ if $s = a \in A_+$ and $a_s = a_\omega$ for
  $s = \omega \in \Omega$, where $a_\omega$ denotes an optimal action
  for the receiver at state $\omega$. Using
  Lemma~\ref{lem:splitting-lemma-markov}, there exists a signaling
  mechanism $\widehat{\sigma}$ inducing the aforementioned beliefs in
  steady state if there exist weights $\{ \hat{w}_s : s \in S\}$ with
  $\sum_{s \in S} \hat{w}_s = 1$, such that
  \begin{align}\label{eq:splitting-condition}
    \sum_{a \in A_+} \sum_{\omega} \hat{w}_a \xi_a(\omega) p(\cdot | \omega, a) + \sum_{\omega} \hat{w}_{\omega} p(\cdot | \omega, a_\omega)  = \sum_{a \in A_+} \hat{w}_a \xi_a + \sum_{\omega} \hat{w}_{\omega} e_\omega.
  \end{align}
  To produce such weights, we first define $\{\hat{w}_a \}$ in terms
  of $\{\hat{w}_\omega \}$. Let
  $\hat{w}_a = (1 - \sum_{\omega} \hat{w}_\omega)w_a$ for each
  $a \in A_+$. Since $\sum_a w_a = 1$, it follows that the weights
  $\{ \hat{w}_s\}$ sum to one as well. Further, to simplify
  expressions, let $\rho \defeq \sum_\omega \hat{w}_\omega$. Then,
  after moving all terms containing $\hat{w}_\omega$ on one side, the
  condition~\eqref{eq:splitting-condition} becomes
  \begin{align}\label{eq:splitting-condition-simple} \frac{1}{1-\rho}
    \left(\sum_{\omega} \hat{w}_{\omega} e_\omega - \sum_{\omega}
    \hat{w}_{\omega} p(\cdot | \omega, a_\omega) \right)
    &= \sum_{a \in A_+} \sum_{\omega} w_a \xi_a(\omega) p(\cdot | \omega, a) - \sum_{a \in A_+} w_a \xi_a \notag\\
    &= \delta \left(\sum_{a \in A_+} \sum_{\omega} w_a \eta_a(\omega)
      p(\cdot | \omega, a) - \sum_{a \in A_+} w_a \eta_a\right),
  \end{align}
  where, in the second equality, we have used
  $\xi_a = (1-\delta) \mu_a + \delta \eta_a$, along with the fact that
  $\{ \mu_a, w_a\}_a$ satisfy \eqref{eq:splitting-converse-condition}.

  In Lemma~\ref{lem:bound-LP}, we show that there exists
  $y = (y_\omega \geq 0 : \omega \in \Omega)$ satisfying
  \begin{align}\label{eq:lp-constraint}
    \sum_{\omega} y_{\omega} e_\omega -  \sum_{\omega} y_{\omega} p(\cdot | \omega, a_\omega) 
    &= \sum_{a \in A_+} \sum_{\omega} w_a \eta_a(\omega) p(\cdot | \omega, a) - \sum_{a \in A_+} w_a \eta_a.
  \end{align}
  For any such $y$, we obtain that $\hat{w}_\omega = \frac{\delta y_\omega}{1 + \delta \|y\|_1}$ and
  $\rho = \sum_{\omega} \hat{w}_\omega = \frac{\delta \|y\|_1}{1 +
    \delta \|y\|_1}$ form a solution to
  \eqref{eq:splitting-condition-simple}, and hence, there exist
  weights satisfying~\eqref{eq:splitting-condition}.

  Thus, by Lemma~\ref{lem:splitting-lemma-markov}, we obtain the
  existence of a history-independent signaling mechanism $\widehat{\sigma}$ sending
  signals $s \in S = A_+ \cup \Omega$, such that in the no-history information model $\Phi_\no$, the posterior beliefs lie in the set
  $\{ \xi_a : a \in A_+\} \cup \{ e_\omega : \omega \in \Omega\}$. The
  mechanism $\widehat{\sigma}$ sends signals with the following
  probabilities: for each $\omega \in \Omega$:
  \begin{align}\label{eq:constructed-sigma}
  \widehat{\sigma}(s | \omega) \defeq 
  \begin{cases}
    \frac{\hat{w}_a \xi_a(\omega)}{\sum_{a'} \hat{w}_{a'} \xi_{a'}(\omega) + \hat{w}_\omega} & \text{for $s \in A_+$;}\\
    \frac{\hat{w}_\omega}{\sum_{a'} \hat{w}_{a'} \xi_{a'}(\omega) + \hat{w}_\omega} & \text{if $s = \omega$;}\\
    0, & \text{otherwise.}
  \end{cases}
  \end{align}
  Here, we interpret the signal $s = a \in A_+$ as an direct
  recommendation to choose action $a$. On the other hand, the signal
  $s = \omega \in \Omega$ fully reveals the state and is interpreted
  as a recommendation to choose the action $a_\omega$. Since
  $\xi_a \in \mathcal{P}_a$ for each $a \in A_+$ and $a_\omega$ is an
  optimal action for the receiver at state $\omega \in \Omega$, we
  conclude that the mechanism $\widehat{\sigma}$ is persuasive, in the
  sense that, the receiver finds it optimal to follow the
  recommendation.

  Note that Lemma~\ref{lem:splitting-lemma-markov} implies that the
  invariant distribution $\widehat{\pi}$ under $\widehat{\sigma}$
  is given by
  $\widehat{\pi}(\omega,a) = \hat{w}_a \xi_a(\omega)\ind\{a \in A_+\}
  + \hat{w}_\omega \ind\{a = a_\omega\}$ for each $\omega \in \Omega$
  and $a \in A$. Let $\widehat{\pi}(\omega) \defeq \sum_{a \in A} \widehat{\pi}(\omega,a) =
  \sum_{a \in A_+} \hat{w}_a \xi_a(\omega) + \hat{w}_\omega$.

  \textbf{Robustness:} Next, we show that $\widehat{\sigma} \in \rp(\epsilon)$. Suppose a receiver's belief about $\bar{\omega}_{t}$ is given by a distribution
  $\pi' \in \Delta(\Omega)$, with $\| \pi' - \widehat{\pi}\|_1 \leq \epsilon$. Upon receiving a signal $s = \omega \in S$ from the signaling
  mechanism $\widehat{\sigma}$, it is straightforward to see that the
  receiver's belief about $\bar{\omega}_t$ continues to update to
  $e_\omega$, and hence $a_{\omega}$ is still optimal for the receiver
  on receiving signal $s = \omega$.

  On the other hand, upon receiving a signal $s = a \in A_+$, the
  receiver's belief about $\bar{\omega}_t$ updates to $\xi'_a$,
  obtained via Bayes' rule as
  $\xi'_a(\omega) = \frac{ \mu'(\omega)
    \widehat{\sigma}(a|\omega)}{\sum_{\omega'}\mu'(\omega')
    \widehat{\sigma}(a|\omega')}$. Using a similar argument as in
  \citep{zu2021learning}, we obtain that the following bound on the
  $\ell_1$ distance between $\xi'_a$ and $\xi_a$ (for the sake of
  completeness, we include the algebraic steps in
  Lemma~\ref{lem:bayes-continuity}):
  \begin{align*}
    \|\xi'_a - \xi_a \|_1 &\leq 2 \left(\sup_{\omega \in \Omega} \frac{\xi_a(\omega)}{\widehat{\pi}(\omega)} \right)  \cdot \| \pi' - \widehat{\pi}\|_1. 
\end{align*}
Since
$\widehat{\pi}(\omega) = \sum_{a \in A} \widehat{\pi}(\omega, a) =
\sum_{a \in A_+} \hat{w}_a \xi_a(\omega) + \hat{w}_\omega$ for each
$\omega \in \Omega$, we have for each $a \in A_+$,
\begin{align*}
  \sup_{\omega \in \Omega} \frac{\xi_a(\omega)}{\widehat{\pi}(\omega)} &= \sup_{\omega \in \Omega} \frac{\xi_a(\omega)}{\sum_{a' \in A_+} \hat{w}_{a'} \xi_{a'}(\omega) + \hat{w}_\omega} \leq  \frac{1}{\hat{w}_a} \leq \frac{1}{(1- \rho)w_{\min}},
\end{align*}
where we use the fact that $\hat{w}_a = (1 - \rho) w_a$ for
$a \in A_+$, and define $w_{\min} \defeq \min_{a \in A_+} w_a$. Thus,
we obtain
\begin{align*}
  \|\xi_a' - \xi_a\|_1
  &\leq \frac{2 \|\pi' - \widehat{\pi}\|_1}{(1- \rho)w_{\min}}
    \leq \frac{2 (1 + \delta \|y\|_1)  \epsilon }{w_{\min} },
\end{align*}
where we have substituted
$\rho = \frac{\delta \|y\|_1}{1 + \delta \|y\|_1}$, and used
$\|\pi' - \widehat{\pi}\|_1 \leq \epsilon$.

% \yzdelete{For $\epsilon \leq \frac{w_{\min}D}{2(1 + \|y\|_1)}$, and for any $\ell \geq \ell_0$, we choose $\epsilon = \frac{2}{\underline{\pi}}e^{-(\ell-1)(1-\bar{\lambda})} \leq \frac{w_{\min}D}{2(1 + \|y\|_1)}$, and choose $\delta = \frac{2 \epsilon }{w_{\min}D - 2 \epsilon \|y\|_1} \in
% [0,1]$, we obtain that $\|\xi'_a - \xi_a\|_1 \leq \delta D$ and hence
% $\xi_a' \in \mathsf{B}_1(\eta_a, \delta D) \subseteq \mathcal{P}_a$. Hence, starting with a prior $\pi'$ with $\| \pi' - \widehat{\pi}\| \leq \epsilon$, the posterior belief upon
% receiving a signal $s = a \in A_+$ lies in the set $\mathcal{P}_a$,
% implying that the action $a$ continues to be optimal for the receiver.
% Taken together, the signaling mechanism $\widehat{\sigma}$ is persuasive
% for all beliefs $\pi' \in \mathsf{B}_1(\widehat{\pi}, \epsilon)$, and
% hence is robust. From Lemma~\ref{lem:robust-partial-revised}, $\widehat{\sigma}$ is persuasive in $\Phi_\ell$. }

For $\epsilon \leq \frac{w_{\min}D}{2(1 + \|y\|_1)}$, choosing
$\delta = \frac{2 \epsilon }{w_{\min}D - 2 \epsilon \|y\|_1} \in
[0,1]$, we obtain that $\|\xi'_a - \xi_a\|_1 \leq \delta D$ and hence
$\xi_a' \in \mathsf{B}_1(\eta_a, \delta D) \subseteq \mathcal{P}_a$.
Hence, starting with a prior $\pi'$ with
$\| \pi' - \widehat{\pi}\| \leq \epsilon$, the posterior belief upon
receiving a signal $s = a \in A_+$ lies in the set $\mathcal{P}_a$,
implying that the action $a$ continues to be optimal for the receiver.
Taken together, the signaling mechanism $\widehat{\sigma}$ is persuasive
for all beliefs $\pi' \in \mathsf{B}_1(\widehat{\pi}, \epsilon)$, and
hence is $\epsilon$-robustly persuasive.

\textbf{Bound on sender's payoff:} Finally, we provide a bound on the
sender's expected utility under the signaling mechanism
$\widehat{\sigma}$, as follows:
\begin{align*}
  \sum_{\omega \in \Omega} \sum_{a \in A} \widehat{\pi}(\omega, a) v(\omega, a)
  &= \sum_{\omega} \sum_{a \in A_+} \hat{w}_a \xi_a(\omega)v(\omega, a)  + \sum_{\omega}   \hat{w}_\omega v(\omega, a_\omega)\\
  &\geq \sum_{\omega} \sum_{a \in A_+} \hat{w}_a \xi_a(\omega)v(\omega, a)  \\
  &\geq (1 - \rho) (1- \delta) \sum_{\omega} \sum_{a \in A_+}   w_a \mu_a(\omega)v(\omega, a) \\
  &= (1 - \rho) (1- \delta) \opt(\Phi_\no),
\end{align*}
where in the second inequality, we use $\hat{w}_a = (1 - \rho) w_a$
and $\xi_a \geq (1- \delta) \mu_a$. 

Substituting for $\delta$ and $\rho$, we obtain the sender's payoff is
lower-bounded by
\begin{align*}
  \frac{1- \delta}{1+ \delta \|y\|_1} \opt(\Phi_\no) =   \left(1 - \frac{2  (1 + \|y\|_1)  }{w_{\min}D} \epsilon\right) \opt(\Phi_\no).
\end{align*}

In Lemma~\ref{lem:bound-LP}, we show that there exists a solution
$y \geq 0$ to \eqref{eq:lp-constraint} satisfying
$\|y \|_1 \leq \frac{2(1+ \tau)\sqrt{|\Omega|}}{s_f}$. Thus, we obtain
that the sender's expected payoff under $\widehat{\sigma}$ is
lower-bounded by
\begin{align*}
  \left(1 - \frac{2 \epsilon}{w_{\min}D} \left(1 + \frac{2(1+ \tau)\sqrt{|\Omega|}}{s_f}    \right) \right) \cdot \opt(\Phi_\no).
\end{align*}
% \yzdelete{Note that $\epsilon = \frac{2}{\underline{\pi}}e^{-(\ell-1)(1-\bar{\lambda})}$, we have that the sender's expected payoff under $\widehat{\sigma}$ is lower-bounded by
% \begin{align*}
%   \left(1 - \frac{4 e^{-(\ell-1)(1-\bar{\lambda})}}{ \pi w_{\min}D} \left(1 + \frac{2(1+ \tau)\sqrt{|\Omega|}}{s_f}    \right) \right) \cdot \opt(\Phi_\no).
% \end{align*}}
This completes the proof.\qed
\end{proof}

The following lemma is used in the proof of
Theorem~\ref{thm:robustness-history-independent}.
\begin{lemma}\label{lem:bound-LP}
  Consider the following linear program:
  \begin{align}\label{small-perturbation-LP}
    \min_{y \geq 0} &\qquad  \sum_{\omega \in \Omega} y_\omega  \notag\\
    \sum_{\omega} y_{\omega} e_\omega -  \sum_{\omega} y_{\omega} p(\cdot | \omega, a_\omega) 
                    &= \sum_{a \in A_+} \sum_{\omega} w_a \eta_a(\omega) p(\cdot | \omega, a) - \sum_{a \in A_+} w_a \eta_a.
  \end{align}
  The preceding linear program is feasible, and its optimal solution
  is upper bounded by $\frac{2(1+ \tau)\sqrt{|\Omega|}}{s_f}$, where
  $\tau = \max_\omega 1/\nu_f(\omega)$, and $s_f$ is the smallest positive
  singular value of $I - P_f$.
\end{lemma}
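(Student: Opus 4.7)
The plan is to rewrite the linear constraint as a linear system of the form $(I - P_f^T)y = r$, verify solvability via a compatibility condition, construct a least-norm particular solution using the pseudoinverse of $I - P_f^T$, and finally shift by a non-negative multiple of the stationary distribution $\nu_f$ to enforce $y\ge 0$, keeping $\|y\|_1$ under control.

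First, viewing the constraint coordinate-wise, the $\omega'$-th entry of the LHS equals $y_{\omega'} - \sum_{\omega} y_\omega P_f(\omega,\omega') = [(I-P_f^T)y]_{\omega'}$, so the constraint reads $(I-P_f^T)y = r$ with $r \defeq \sum_{a\in A_+} w_a\bigl(P_a^T \eta_a - \eta_a\bigr)$, where $P_a$ is the transition matrix under action $a$. Because each $\eta_a$ and each $P_a^T\eta_a$ is a probability distribution, every summand $P_a^T\eta_a-\eta_a$ has entries summing to zero, hence $\mathbf{1}^Tr=0$. Under the unichain assumption, $P_f$ is ergodic, so $\ker(I-P_f)=\mathrm{span}(\mathbf 1)$ and $\ker(I-P_f^T)=\mathrm{span}(\nu_f)$; thus $\mathrm{range}(I-P_f^T)=\mathbf{1}^\perp$, which contains $r$. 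This establishes feasibility of the underlying linear equation.

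Next, let $\tilde y$ be the minimum-$\ell_2$-norm solution obtained via the pseudoinverse $(I-P_f^T)^{+} r$. Since $r$ lies in the range, a standard SVD argument yields $\|\tilde y\|_2 \le \|r\|_2/s_f$, where $s_f$ is the smallest positive singular value of $I-P_f$ (equivalently of $I-P_f^T$). Moreover, $\|r\|_2 \le \|r\|_1 \le \sum_{a\in A_+} w_a\|P_a^T\eta_a-\eta_a\|_1 \le 2\sum_a w_a = 2$, using that each term is a difference of probability distributions. To enforce $y\ge 0$, set $y = \tilde y + c\nu_f$ with $c \defeq \max_\omega\max\{0,-\tilde y_\omega/\nu_f(\omega)\}$; since $\nu_f\in\ker(I-P_f^T)$, this shift preserves the equality constraint. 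The choice of $c$ gives $c\le \tau\|\tilde y\|_\infty\le \tau\|\tilde y\|_2$, while $\sum_\omega\tilde y_\omega \le \sqrt{|\Omega|}\,\|\tilde y\|_2$ by Cauchy–Schwarz.

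Combining these bounds, using $\sum_\omega\nu_f(\omega)=1$,
\begin{align*}
\|y\|_1 = \sum_\omega y_\omega = \sum_\omega \tilde y_\omega + c \le \bigl(\sqrt{|\Omega|}+\tau\bigr)\|\tilde y\|_2 \le \frac{2\bigl(\sqrt{|\Omega|}+\tau\bigr)}{s_f} \le \frac{2(1+\tau)\sqrt{|\Omega|}}{s_f},
\end{align*}
where the last step uses $|\Omega|\ge 1$. The main delicate point is really just the bookkeeping: keeping track of which transposition gives which kernel/range, confirming the correct compatibility condition $\mathbf{1}^T r=0$, and ensuring that the non-negativity shift by $\nu_f$ (rather than $\mathbf 1$) is the right one, since it is $\nu_f$ that lies in $\ker(I-P_f^T)$ and inherits strict positivity from ergodicity, which is exactly why $\tau=\max_\omega 1/\nu_f(\omega)$ is finite and yields the claimed factor.
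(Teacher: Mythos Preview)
Your proof is correct and follows essentially the same route as the paper: recast the constraint as a linear system involving $I-P_f^T$, verify solvability via $\mathbf{1}^T r=0$ and the rank of $I-P_f$, bound the solution's norm through the smallest positive singular value $s_f$, and shift by a multiple of $\nu_f$ to obtain non-negativity. Your explicit use of the pseudoinverse (minimum-$\ell_2$-norm solution) is slightly cleaner than the paper's ``let $u$ be any solution'' followed by an SVD step that tacitly needs $u\perp\nu_f$, but the argument and the resulting bound are the same.
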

\begin{proof}
  To show the feasibility of he linear program~\eqref{small-perturbation-LP}, we first cast it into a matrix form. Let $P_a \in \reals^{|\Omega| \times |\Omega|}$ be the matrix
  with $P_a(\omega, \omega') \defeq p(\omega'|\omega, a)$.
  Then, the \lp~\eqref{small-perturbation-LP} can be recasted as
\begin{align*}
  \min_{y \in \reals^{|\Omega|}} \quad &  \one^T \cdot y \\
  y^T (I - P_f) &= \sum_{a \in A_+} w_a \eta_a^T (P_a - I) \\
   y &\geq 0.
\end{align*}
where $y $, and $\one \in \reals^{\Omega}$ is the all-one vector. 

Since $P_f$ and $\{P_a\}_{a \in A}$ are transition kernels, we have that $(I-P_f) \one = (I-P_a) \one = 0$ for all $a \in A$. This implies that $\one$ does not lie in the row span of $I-P_f$ and $I-P_a$ for any $a \in A_+$, and hence $\one$ is orthogonal to vector $\sum_{a \in A_+} w_a \eta_a^T (P_a - I)$. Since $P_f$ is ergodic (from Assumption~\ref{as:unichain}), we have $\mathsf{rank}(I-P_f) = |\Omega|-1$ and thus the vector $\sum_{a \in A_+} w_a \eta_a^T (P_a - I)$ lies in the row span of $I-P_f$. Therefore, the equation $y^T (I - P_f) = \sum_{a \in A_+} w_a \eta_a^T (P_a - I)$ has a feasible solution. Thus, it remains to be shown that there exists one that is non-negative.

Let $u$ be any solution to the equality in the linear program. Because $P_f$ is ergodic, there is a unique stationary distribution $\nu_f$ such that $\nu_f^T(I-P_f) = 0$ and each element $\nu_f(\omega) > 0$ for all $\omega \in \Omega$. Let $y = u + k\nu_f$ where $k = \max_{\omega:u(\omega) <0} \frac{|u(\omega)|}{\nu_f(\omega)}$ is chosen so that $y \geq 0$. Hence, we obtain that $y$ is feasible for the LP~\eqref{small-perturbation-LP}.

The proof is complete upon showing that $\|y\|_1 = \|u + k\nu_f\|_1 \leq \frac{2(1+\tau)\sqrt{|\Omega|}}{s_f}$.  To see this, note that since $\mathsf{rank}(P_f -I) = |\Omega|-1$, the matrix
 $I-P_f$ has a singular value decomposition $Q\Lambda \Tilde{Q}^T$, where $\Lambda \in \reals^{(|\Omega|-1) \times (|\Omega|-1)}$ is the diagonal matrix of singular values and $Q, \tilde{Q} \in \reals^{|\Omega| \times (|\Omega|-1)}$ are composed of orthogonal column vectors. Because multiplying by an orthogonal matrix preserves the $\ell_2$ norm, we obtain 
\begin{align*}
    \|u^T(I-P_f) \|_2  = \|u^T  Q\Lambda \Tilde{Q}^T \|_2 = \|u^T Q\Lambda \|_2
    \geq s_f \|u^T Q \|_2 =  s_f \|u\|_2 \geq \frac{s_f}{\sqrt{|\Omega|}} \|u\|_1, 
\end{align*}
where $s_f$ is the smallest diagonal element (i.e., the smallest singular value) of $\Lambda$, and the final inequality follows from the relationship between $\ell_1$ and $\ell_2$ norms. Note that from Assumption~\ref{as:unichain}, we have $s_f > 0$.
 %  Using the triangle inequality, we have
 % \begin{align*}
 %     \Big\lVert \sum_{a \in A_+} w_a \eta_a^T (P_a - I) \Big \rVert_1 &\leq \sum_{a \in A_+} w_a\Big\lVert  \eta_a^T (P_a - I) \Big \rVert_1  \\
 %     &\leq \sqrt{|\Omega|} \sum_{a \in A_+} w_a\Big\lVert  \eta_a^T (P_a - I) \Big \rVert_2.
 % \end{align*}
 %  By the same argument of singular value decomposition of $P_a - I$, it follows that $\Big\lVert  \eta_a^T (P_a - I) \Big \rVert_2 \leq \Lambda_a \|\eta_a \|_2$. Because $\|\eta_a \|_2 \leq 1$ and $\sum_{a \in A_+} w_a = 1$, we have 
 % \begin{align*}
 %       \Big\lVert \sum_{a \in A_+} w_a \eta_a^T (P_a - I) \Big \rVert_1 
 % & \leq \sqrt{|\Omega|} \Lambda_{\max}.
 % \end{align*}
 On the other hand, 
 \begin{align*}
     \Big\lVert \sum_{a \in A_+} w_a \eta_a^T (P_a - I) \Big \rVert_1 
     &\leq \sum_{a \in A_+} w_a \| \eta_a^T (P_a - I)\|_1 \\
     &\leq \sum_{a \in A_+} w_a \sum_{j = 1}^{|\Omega|} \Big \lvert \eta_a^{(j)}(p_{jj}-1+\sum_{i \neq j} p_{ij})\Big \rvert \\
     & \leq \sum_{a \in A_+} w_a \sum_{j = 1}^{|\Omega|}  \eta_a^{(j)} (1-p_{jj} + \sum_{i \neq j} p_{ij}) \\
     &\leq 2\sum_{a \in A_+} w_a \sum_{j = 1}^{|\Omega|} \eta_a^{(i)} \\
     & =2,
 \end{align*}
where the second inequality follows triangle inequality.   Taken together, $\|u\|_1 \leq \frac{2\sqrt{|\Omega|}}{s_f}$. Finally observe that $k = \max_{\omega:u(\omega) <0} \frac{|u(\omega)|}{\nu_f(\omega)} \leq \max_{\omega} \frac{1}{\nu_f(\omega)}  \cdot  \max_{\omega:u(\omega) <0} |u(\omega)| \leq \tau \|u\|_1$. Hence, using the fact that $\|\nu_f\|_1 =1$, we obtain  $ \|y \|_1 = \| u + k \nu_f \|_1 \leq \|u \|_1 + k \|\nu_f\|_1 = \|u\|_1 + k  \leq (1+\tau) \|u\|_1 \leq \frac{2(1+  \tau)\sqrt{|\Omega|}}{s_f}$.\qed
\end{proof}

The following lemma establishes the continuity of the Bayes' update.
The proof is from \citep{zu2021learning}; we include it
here for completeness. We use the same notation as in the proof of
Theorem~\ref{thm:robustness-history-independent}.
\begin{lemma}\label{lem:bayes-continuity} For each $a \in A_+$, we
  have
  \begin{align*}
    \|\xi_a' -\xi_a\|_1 \leq 2 \left(\sup_{\omega \in \Omega} \frac{\xi_a(\omega)}{\widehat{\pi}(\omega)} \right)  \cdot \| \pi' - \widehat{\pi}\|_1.
  \end{align*}
\end{lemma}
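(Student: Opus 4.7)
Write the two posteriors via Bayes' rule: let $Z \defeq \sum_{\omega'} \widehat{\pi}(\omega') \widehat{\sigma}(a|\omega')$ and $Z' \defeq \sum_{\omega'} \pi'(\omega') \widehat{\sigma}(a|\omega')$, so that $\xi_a(\omega) = \widehat{\pi}(\omega)\widehat{\sigma}(a|\omega)/Z$ and $\xi_a'(\omega) = \pi'(\omega)\widehat{\sigma}(a|\omega)/Z'$. My plan is to decompose $\xi_a(\omega) - \xi_a'(\omega)$ as the sum of a term proportional to $(\widehat{\pi}(\omega) - \pi'(\omega))$ and a term proportional to $(Z' - Z)$, then use the trivial bound $|Z' - Z| \leq \sum_{\omega}\widehat{\sigma}(a|\omega)|\pi'(\omega) - \widehat{\pi}(\omega)|$ to combine the two into a single factor of $2\|\pi' - \widehat{\pi}\|_1$.

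The key algebraic step is writing the common numerator as
\begin{align*}
\widehat{\pi}(\omega) Z' - \pi'(\omega) Z = \bigl(\widehat{\pi}(\omega) - \pi'(\omega)\bigr) Z' + \pi'(\omega)(Z' - Z),
\end{align*}
so that after dividing by $Z Z'$ and multiplying by $\widehat{\sigma}(a|\omega)$, I get
\begin{align*}
\xi_a(\omega) - \xi_a'(\omega) \;=\; \frac{\widehat{\sigma}(a|\omega)}{Z}\bigl(\widehat{\pi}(\omega) - \pi'(\omega)\bigr) \;+\; \xi_a'(\omega)\,\frac{Z' - Z}{Z}.
\end{align*}
Taking absolute values, summing over $\omega$, and using $\sum_\omega \xi_a'(\omega) = 1$ yields
\begin{align*}
\|\xi_a - \xi_a'\|_1 \;\leq\; \frac{1}{Z}\sum_{\omega} \widehat{\sigma}(a|\omega) |\widehat{\pi}(\omega) - \pi'(\omega)| \;+\; \frac{|Z - Z'|}{Z}.
\end{align*}
Applying the triangle-inequality bound on $|Z - Z'|$ noted above collapses both terms into $(2/Z)\sum_\omega \widehat{\sigma}(a|\omega)|\pi'(\omega) - \widehat{\pi}(\omega)|$, which is at most $2(\sup_\omega \widehat{\sigma}(a|\omega)/Z)\|\pi' - \widehat{\pi}\|_1$. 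To finish, I use the identity $\widehat{\sigma}(a|\omega)/Z = \xi_a(\omega)/\widehat{\pi}(\omega)$, which is immediate from $\xi_a(\omega) = \widehat{\pi}(\omega)\widehat{\sigma}(a|\omega)/Z$, to rewrite the supremum and obtain the stated bound.

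The only delicate point is choosing the right decomposition of $\widehat{\pi}(\omega) Z' - \pi'(\omega) Z$: the ``symmetric'' split using $(\widehat{\pi}(\omega) - \pi'(\omega))Z + \widehat{\pi}(\omega)(Z' - Z)$ would leave $Z'$ in the denominator, which does not match the bound in the lemma. Splitting off $\pi'(\omega)(Z'-Z)$ instead puts $Z$ in the denominator, which is exactly what is needed to recover $\sup_\omega \xi_a(\omega)/\widehat{\pi}(\omega)$. Everything else is a routine triangle-inequality computation.
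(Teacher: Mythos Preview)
Your proof is correct and is essentially the same as the paper's: the paper inserts the intermediate term $\pi'(\omega)\widehat{\sigma}(a|\omega)/Z$ and applies the triangle inequality, which is exactly your algebraic decomposition $\widehat{\pi}(\omega)Z' - \pi'(\omega)Z = (\widehat{\pi}(\omega) - \pi'(\omega))Z' + \pi'(\omega)(Z'-Z)$ written in additive form. Both routes yield the identical two terms $\frac{1}{Z}\sum_\omega \widehat{\sigma}(a|\omega)|\widehat{\pi}(\omega)-\pi'(\omega)|$ and $|Z-Z'|/Z$, and finish with the same identity $\widehat{\sigma}(a|\omega)/Z = \xi_a(\omega)/\widehat{\pi}(\omega)$.
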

\begin{proof} We obtain
  \begin{align*}
  \|\xi'_a - \xi_a \|_1
  &= \sum_{\omega \in \Omega} |\xi'_a(\omega) - \xi_a(\omega)| \\
  &= \sum_{\omega \in \Omega} \left| \frac{ \pi'(\omega) \widehat{\sigma}(a|\omega)}{\sum_{\omega'}\pi'(\omega') \widehat{\sigma}(a|\omega')} -   \frac{ \widehat{\pi}(\omega) \widehat{\sigma}(a|\omega)}{\sum_{\omega'}\widehat{\pi}(\omega') \widehat{\sigma}(a|\omega')} \right| \\
  &\leq \sum_{\omega \in \Omega} \left| \frac{ \pi'(\omega) \widehat{\sigma}(a|\omega)}{\sum_{\omega'}\pi'(\omega') \widehat{\sigma}(a|\omega')} -   \frac{ \pi'(\omega) \widehat{\sigma}(a|\omega)}{\sum_{\omega'}\widehat{\pi}(\omega') \widehat{\sigma}(a|\omega')} \right| \\
  &\quad + \sum_{\omega \in \Omega} \left| \frac{ \pi'(\omega) \widehat{\sigma}(a|\omega)}{\sum_{\omega'}\widehat{\pi}(\omega') \widehat{\sigma}(a|\omega')} -   \frac{ \widehat{\pi}(\omega) \widehat{\sigma}(a|\omega)}{\sum_{\omega'}\widehat{\pi}(\omega') \widehat{\sigma}(a|\omega')} \right| \\
  &\leq  \left| \sum_{\omega} \frac{\widehat{\sigma}(a|\omega)}{\sum_{\omega'}\widehat{\pi}(\omega') \widehat{\sigma}(a|\omega')} \left(\widehat{\pi}(\omega)  - \pi'(\omega)\right) \right| \\
  &\quad + \sum_{\omega \in \Omega} \frac{\widehat{\sigma}(a|\omega)}{\sum_{\omega'}\widehat{\pi}(\omega') \widehat{\sigma}(a|\omega')} \left| \pi'(\omega)  -   \widehat{\pi}(\omega) \right| \\
  &\leq 2 \sum_{\omega \in \Omega} \frac{\widehat{\sigma}(a|\omega)}{\sum_{\omega'}\widehat{\pi}(\omega') \widehat{\sigma}(a|\omega')} \left| \pi'(\omega)  -   \widehat{\pi}(\omega) \right| \\
  &\leq 2 \left(\sup_{\omega \in \Omega} \frac{\widehat{\sigma}(a|\omega)}{\sum_{\omega'}\widehat{\pi}(\omega') \widehat{\sigma}(a|\omega')} \right)  \cdot \| \pi' - \widehat{\pi}\|_1\\
   &= 2 \left(\sup_{\omega \in \Omega} \frac{\xi_a(\omega)}{\widehat{\pi}(\omega)} \right)  \cdot \| \pi' - \widehat{\pi}\|_1,
   \end{align*}
where the last equality follows from the definition of
$\xi_a(\omega)$.\qed
\end{proof}

%%% Local Variables:
%%% mode: latex
%%% TeX-master: "markov-persuasion"
%%% End:

\end{document}